\documentclass[11pt,letterpaper]{article}
\usepackage[T1]{fontenc}
\usepackage{lineno}
\usepackage{amsfonts,amsmath,amsthm,pifont,amssymb,dsfont,mathtools,algpseudocode}
\usepackage[margin=1in]{geometry}
\usepackage[colorlinks,citecolor=blue,linkcolor=blue,urlcolor=red]{hyperref}
\usepackage{url}
\usepackage[linesnumbered,boxed,ruled,vlined]{algorithm2e}
\usepackage{thm-restate}
\usepackage{enumitem}
\usepackage{etoolbox}
\usepackage{graphicx}
\usepackage{mathdots}
\usepackage[normalem]{ulem}
\usepackage{xspace}
\usepackage{authblk}
\xspaceaddexceptions{]\}}
\usepackage{comment}
\usepackage{tabu}
\usepackage{framed}
\usepackage{float,wrapfig}
\usepackage{subfigure}
\usepackage{tikz}
\usepackage[draft]{fixme}
\usepackage[capitalise]{cleveref}

\newcommand*\linenomathpatch[1]{%
  \cspreto{#1}{\linenomath}%
  \cspreto{#1*}{\linenomath}%
  \csappto{end#1}{\endlinenomath}%
  \csappto{end#1*}{\endlinenomath}%
}

\linenomathpatch{equation}
\linenomathpatch{gather}
\linenomathpatch{multline}
\linenomathpatch{align}
\linenomathpatch{alignat}
\linenomathpatch{flalign}

\usetikzlibrary{arrows,arrows.meta,decorations.pathmorphing}
\theoremstyle{plain}
\newtheorem{theorem}{Theorem}[section]
\newtheorem{lemma}[theorem]{Lemma}

\newtheorem{corollary}[theorem]{Corollary}

\newtheorem{observation}[theorem]{Observation}

\theoremstyle{definition}

\fxsetup{mode=multiuser,theme=color, layout=inline}
\FXRegisterAuthor{tk}{atk}{\color{brown} {\bf Tomasz}}

\setlist[enumerate]{nosep, topsep=1ex}
\setlist[itemize]{nosep, topsep=1ex}
\setlist[description]{nosep}
\allowdisplaybreaks
\def\ShowAuthNotes{1}
\ifnum\ShowAuthNotes=1
\newcommand{\authnote}[2]{\ \\ \textcolor{red}{\parbox{0.9\linewidth}{[{\footnotesize {\bf #1:} { {#2}}}]}}\newline}
\else
\newcommand{\authnote}[2]{}
\fi

\newcommand{\rank}{\mathsf{rank}}
\newcommand{\select}{\mathsf{select}}

\newcommand{\dd}{\mathinner{.\,.\allowbreak}}

\DeclareMathOperator{\LF}{LF}
\DeclareMathOperator{\BWT}{BWT}
\DeclareMathOperator{\SA}{SA}

\DeclareMathOperator{\LCE}{LCE}
\DeclareMathOperator{\ISA}{ISA}

\DeclareMathOperator{\LFm}{\emph{modi}-LF}

\DeclareMathOperator{\COUNT}{COUNT}

\title{Compressed Inverse Suffix Arrays
}

\author{
Sharma V. Thankachan
}

\affil{
Department of Computer Science \\ North Carolina State University, Raleigh, NC, USA \\{svalliy@ncsu.edu}
}

\date{}

\begin{document}
	\maketitle
	\thispagestyle{empty}

\begin{abstract}

The suffix array ($\SA$) and inverse suffix array ($\ISA$) are fundamental data structures in string algorithms. Given a text $T[0 \dd n)$ over an integer alphabet $[0 \dd \sigma)$, the $\SA$ is a permutation of $[0 \dd n)$ that lists the starting positions of the suffixes of $T$ in lexicographic order, and the $\ISA$ is its inverse permutation. 
Explicitly storing either structure requires $\Theta(n\log n)$ bits. A central goal of \emph{compressed text indexing} is therefore to represent them in space close to the $n\log\sigma$ bits required to store the text, while supporting efficient access.
Two landmark results first addressed this challenge: the \emph{FM-index} of Ferragina and Manzini [FOCS 2000] and the \emph{Compressed Suffix Array} (CSA) of Grossi and Vitter [STOC 2000]. The FM-index typically uses $n\log\sigma$ bits
 plus lower-order terms
and supports both $\SA$ and $\ISA$ queries in roughly logarithmic time. The CSA uses $O(n\log\sigma)$ bits  and supports both queries in $O(\log_\sigma^\epsilon n)$ time, for any fixed constant $\epsilon \in (0,1]$.
These structures have since become foundational, motivating extensive work on space--time trade-offs. 
Notably, essentially all previously known approaches, including entropy-compressed and recent repetitiveness-aware encodings, use the same underlying techniques to support $\SA$ and $\ISA$ queries, yielding nearly identical space--time trade-offs for the two operations.
This raises a fundamental question: \emph{under the same asymptotic space bound, do $\SA$ and $\ISA$ queries have the same inherent query-time complexity?}

We address this question by providing strong evidence that the longstanding symmetry between $\SA$ and $\ISA$ queries need not be intrinsic to the two problems. Specifically, achieving $\log^{o(1)} n$ query time for $\SA$ under an $O(n\log\sigma)$-bit space bound would require a major breakthrough in computational geometry, providing evidence for the difficulty of improving upon the classical CSA bound. In contrast, $\ISA$ queries can be supported in doubly logarithmic time under the same space bound. The former result is a straightforward corollary of a hardness result by Chien {\it et al.}~[Algorithmica 2015]. The latter is achieved through our new encoding, which in fact uses only near-succinct space of $n\log\sigma+o(n\log\sigma)+O(n)$ bits and supports $\ISA$ queries in time
$
t_{\ISA}=O(\tau+{\log\log n}/{\log\log\sigma}),
$
where $\tau=\omega(1)$ can be an arbitrarily slowly growing function of $\sigma$.
Our structure also supports $\SA$ queries in time
$
t_{\SA}=O(t_{\ISA}\tau\log_{\sigma} n).
$
The leading $n\log\sigma$-bit term in the space bound is used solely to store the text in packed form, thereby enabling optimal-time substring extraction and efficient packed pattern searching.
Building on Sadakane's work~[SODA 2002], we also provide a suffix tree encoding that serves as an auxiliary structure to the text, uses only $O(n\log\log\log\sigma)$ bits, and supports various standard operations in polylogarithmic time. To the best of our knowledge, this is the first work to place suffix trees and suffix arrays in an auxiliary-space regime that is asymptotically smaller than the text's space for growing alphabets.

% We address this question by providing strong evidence that the longstanding symmetry between $\SA$ and $\ISA$ queries need not be intrinsic to the two problems. Specifically, achieving $\log^{o(1)} n$ query time for $\SA$ under an $O(n\log\sigma)$-bit space bound would require a major breakthrough in computational geometry, providing evidence of the difficulty of improving upon the classical CSA bound. In contrast, $\ISA$ queries can be supported in doubly logarithmic time using $n\log\sigma+o(n\log\sigma)+O(n)$ bits, substantially advancing the state of the art in near-succinct space. The former result is a straightforward corollary of a hardness result by Chien {\it et al.}~[Algorithmica 2015]. The latter is achieved through our new encoding, and our query time is
% $t_{\ISA}=O\!\left(\tau+\log\log n/\log\log\sigma\right)$, where $\tau=\omega(1)$ can be an arbitrarily slowly growing function.
% Our structure also supports $\SA$ queries in time
% $t_{\SA}=O(t_{\ISA}\tau\log_{\sigma} n)$.
% The leading $n\log\sigma$-bit term in the space bound is used solely to store the text in packed form, thereby enabling optimal-time substring extraction and efficient packed pattern searching.
% Building on Sadakane's work [Theory Comput. Syst. 2007], we also provide a suffix tree encoding that serves as an auxiliary structure to the text and supports various standard operations in polylogarithmic time using only $O(n\log\log\log\sigma)$ bits.

\end{abstract}

\newpage 
\section{Introduction and Related Work}

 Let $T[0 \dd n)$ be a string over an integer alphabet $[0 \dd \sigma)$, which we refer to as the \emph{text} throughout the paper.
Without loss of generality, we adopt the following standard assumptions: $2 < \sigma \le n$, and the last character in $T$ is a special character $\$$, which occurs exactly once and is the smallest (i.e., it is assigned the integer value $0$). 
This ensures that no suffix is a prefix of another.
The \emph{suffix array} of $T$, denoted by $\SA[0 \dd n)$, is the array that lists the starting positions of all suffixes of $T$ in lexicographic order. Since $T$ ends with $\$$, this is equivalent to listing the starting positions of all cyclic suffixes (i.e., cyclic rotations) of $T$ in lexicographic order.
The \emph{inverse suffix array}, denoted $\ISA[0 \dd n)$, is the inverse permutation of $\SA$. Therefore, $\SA[i] = j$ if and only if $\ISA[j] = i$. See Table~\ref{table_example}
 for an example. 
 Both of these arrays can be constructed in linear time~\cite{DBLP:conf/focs/Weiner73,DBLP:journals/algorithmica/Ukkonen95,DBLP:journals/siamcomp/ManberM93,DBLP:conf/focs/Farach97,DBLP:conf/icalp/KarkkainenS03,DBLP:conf/cpm/KoA03}. They form the foundation of many standard algorithms in bioinformatics, data compression, and full-text search systems~\cite{gusfield1997algorithms}. They are also widely discussed in modern textbooks~\cite{cormen2022introduction,kingsford2026introduction,Nav16BOOK}.

\begin{table}[h!]
\centering
\begin{tabular}{|c|l|c|c|c|}
\hline
{Index ($i$)} & {Sorted Suffixes} & {Sorted Cyclic Suffixes} & {$\SA[i]$} & {$\ISA[i]$} \\
\hline
0 & $T[6\dd n)=\text{\$}$        & $\text{\$banana}$  & 6 & 4 \\
1 & $T[5\dd n)=\text{a\$}$       & $\text{a\$banan}$  & 5 & 3 \\
2 & $T[3\dd n)=\text{ana\$}$     & $\text{ana\$ban}$  & 3 & 6 \\
3 & $T[1\dd n)=\text{anana\$}$   & $\text{anana\$b}$  & 1 & 2 \\
4 & $T[0\dd n)=\text{banana\$}$  & $\text{banana\$}$  & 0 & 5 \\
5 & $T[4\dd n)=\text{na\$}$      & $\text{na\$bana}$  & 4 & 1 \\
6 & $T[2\dd n)=\text{nana\$}$    & $\text{nana\$ba}$  & 2 & 0 \\
\hline
\end{tabular}
\caption{$\SA$ and $\ISA$ for the text $T[0 \dd 7) = \text{banana\$}$.}
\label{table_example}
\end{table}

% \begin{table}[h!]
% \centering
% \begin{tabular}{|c|l|c|c|c|}
% \hline
% {Index ($i$)} & {Sorted Suffixes} & {Sorted Cyclic Suffixes} & {$\SA[i]$} & {$\ISA[i]$} \\
% \hline
% 0 & $T[6\ldots n)=\text{\$}$        & $\text{\$banana}$  & 6 & 4 \\
% 1 & $T[5\ldots n)=\text{a\$}$       & $\text{a\$banan}$  & 5 & 3 \\
% 2 & $T[3\ldots n)=\text{ana\$}$     & $\text{ana\$ban}$  & 3 & 6 \\
% 3 & $T[1\ldots n)=\text{anana\$}$   & $\text{anana\$b}$  & 1 & 2 \\
% 4 & $T[0\ldots n)=\text{banana\$}$  & $\text{banana\$}$  & 0 & 5 \\
% 5 & $T[4\ldots n)=\text{na\$}$      & $\text{na\$bana}$  & 4 & 1 \\
% 6 & $T[2\ldots n)=\text{nana\$}$    & $\text{nana\$ba}$  & 2 & 0 \\
% \hline
% \end{tabular}
% \caption{$\SA$ and $\ISA$ for the text $T[0 \dd 7) = \text{banana\$}$.}
% \label{table_example}
% \end{table}

\vspace{-7mm}

\subsection{Space Regimes and Known Trade-offs}
We now review the main known space--time trade-offs for the suffix array (SA) and inverse suffix array (ISA).
Throughout this paper, $\epsilon$ denotes an arbitrarily small constant satisfying $0 < \epsilon < 1$.
Also, $\log x$ is defined as $\max\{1,\log_2 x\}$, and iterated logarithms use this clamped log recursively.

\vspace{-3mm}
\paragraph{Linear Space.}
Both SA and ISA can be stored explicitly using $O(n)$ words, 
which is equivalent to $\Theta(n \log n)$ bits. This space is typically referred to as \emph{uncompressed} or \emph{linear space}.  
On the other hand, the text itself can be represented in  $n\lceil \log \sigma \rceil$ bits in a \emph{packed setting}, where each character in $[0 \dd \sigma)$ is encoded using $\lceil \log \sigma \rceil$ bits.\footnote{Strictly speaking, it suffices to store $T[0 \dd n-1) \in [1 \dd \sigma)^{n-1}$
%in $(n-1)\lceil \log (\sigma -1) \rceil$ bits, 
since $\$$ is unique and its position is fixed, and the information-theoretic lower bound is $(n-1)\log(\sigma-1)$ bits. However, for convenience of presentation, we assume that the text $T$ itself is stored using a fixed-length encoding of $\lceil \log \sigma \rceil$ bits per character. This simplifying assumption is without loss of generality, since all structures developed in this work are auxiliary.}
Consequently, compared to the text, the plain representations of SA and ISA  can incur a logarithmic-factor blowup in space. This overhead is particularly critical in application domains such as bioinformatics, where $n$ can be on the order of billions, while the alphabet size is  constant.
This has motivated the development of their space-efficient encodings.

\vspace{-4mm}

\paragraph{Succinct and Compact Space.}
We say that a data structure uses \emph{succinct space}
if it occupies $n\log\sigma + o(n\log\sigma)$ bits. We say that it uses \emph{compact space} if it
occupies $O(n\log\sigma)$ bits.
The first set of major breakthroughs in space-efficient encodings of SA and ISA emerged in the year 2000 with the introduction of the \emph{Compressed Suffix Array} (CSA)~\cite{DBLP:journals/siamcomp/GrossiV05} and the \emph{FM-index}~\cite{DBLP:journals/jacm/FerraginaM05}.
The paper~\cite{DBLP:journals/siamcomp/GrossiV05} presents two space--time trade-offs, and a third trade-off can be obtained based on Rao's work~\cite{DBLP:journals/ipl/Rao02} (originally presented for a binary alphabet), 
as summarized in Table~\ref{table_CSA}.
The FM-index provides a solution 
by leveraging the Burrows--Wheeler Transform (BWT)~\cite{burrows1994block} of the text. Specifically, it stores the $\BWT$ in a succinct data structure supporting \emph{rank} and \emph{select} operations, together with sampled versions
of SA and ISA. These samples require $(n/\tau)\log\sigma$ bits for a parameter $\tau \geq 1$, resulting in a total space of $(1 + 1/\tau+o(1))\, n \log \sigma$ bits, which can be made succinct by tuning the parameter $\tau$.
Both SA and ISA queries are answered using $O(\tau\log_{\sigma} n)$ rank
queries on the BWT. The  version of the FM-index by Belazzougui and Navarro supports SA/ISA queries in $O(\tau\log_{\sigma} n)$ time~\cite{DBLP:journals/talg/BelazzouguiN14}, using $(1+1/\tau)n\log\sigma+O(n\log\log\sigma)$ bits of space.

\begin{table}[h]
\centering
\begin{tabular}{|l|l|l|}
\hline
\textbf{Reference} & \textbf{Space (in bits)} & \textbf{Query Time} \\
\hline
\cite{DBLP:journals/siamcomp/GrossiV05} & $O(n \log \sigma)$ & $O(\log_{\sigma}^{\epsilon} n)$ \\
\hline
\cite{DBLP:journals/siamcomp/GrossiV05} & $O(n \log \sigma \cdot \log \log_{\sigma} n)$ & $O(\log \log_{\sigma} n)$ \\
\hline
\cite{DBLP:journals/ipl/Rao02} & $O(n \log \sigma \cdot \log_{\sigma}^{\epsilon} n)$ & $O(1)$ \\
\hline
\end{tabular}
\caption{Space--time trade-offs for SA/ISA queries.}
\label{table_CSA}
\end{table}

These foundational results initiated a substantial body of research in this area.
See~\cite{DBLP:journals/mst/Sadakane07,DBLP:conf/soda/GrossiGV03,DBLP:conf/soda/FerraginaM04,DBLP:journals/ipl/Rao02,DBLP:journals/jacm/GagieNP20,DBLP:journals/talg/BelazzouguiN14,DBLP:conf/focs/KempaK23,DBLP:conf/soda/KempaK26} for some key developments.
Among these, Sadakane's work on \emph{Compressed Suffix Trees}~\cite{DBLP:journals/mst/Sadakane07} is particularly notable.
It essentially shows that, with an additional $O(n)$ bits of space, one can support various suffix-tree functionalities.
% , including the fundamental \emph{Longest Common Extension} (LCE) queries.
% An LCE query $\LCE(i,j)$ on $T$ asks to return the length of the longest common prefix of the suffixes of $T$ starting at positions $i$ and $j$. 
We refer to~\cite{Nav16BOOK} for further reading.

We next describe compressed space. We remind the reader that the results
introduced in this work are not primarily concerned with achieving
compressed space itself; rather, compressed space provides a useful context
for understanding our results.

\vspace{-4mm}
\paragraph{Compressed Space.}
Text compression schemes can be roughly categorized into two types: \emph{entropy-based} compression and \emph{repetitiveness-aware} compression.
Entropy-based schemes have witnessed many early successes, leading to
higher-order entropy-compressed versions of the FM-index~\cite{DBLP:conf/soda/FerraginaM04,DBLP:journals/talg/BelazzouguiN14} and the CSA~\cite{DBLP:conf/soda/GrossiGV03}.
However, 
the theoretical foundations of {repetitiveness-aware} compression schemes have long remained difficult to fully understand. Only in recent years has the research community begun to systematically investigate this area and achieve substantial progress; see the  surveys by Navarro~\cite{DBLP:journals/csur/Navarro21a,DBLP:journals/csur/Navarro21b}.
These schemes are particularly well suited for texts that contain many long repeated substrings. Such repetitiveness is common in modern datasets, including collections of multiple genomes (e.g., pangenomes), versioned documents, and other structured data.
Popular repetitiveness-aware compression methods include Lempel--Ziv compression~\cite{ziv1977universal}, run-length encoded BWT~\cite{burrows1994block}, and grammar compression. While there has been notable progress in repetitiveness-aware text indexing data structures, primarily for pattern matching, only recently has a clearer understanding emerged of how these seemingly disparate compression schemes are theoretically connected.
We now highlight several key results, including recent developments in repetitiveness-aware encodings of $\SA$ and $\ISA$.

\begin{enumerate}
    \item An $O(r \log(n/r))$-word data structure with $O(\log(n/r))$ query time by Gagie, Navarro and Prezza~\cite{DBLP:journals/jacm/GagieNP20}. Here, $r$, a popular measure of repetitiveness, denotes the number of runs in the text's BWT, where a run is a maximal sequence of identical characters. 
  Although the measure $r$ tends to be small compared to $n$ for highly repetitive texts in practice, at that time it was unclear from a theoretical standpoint how $r$ relates to other popular measures of repetitiveness, such as $z$, the number of phrases in the Lempel-Ziv parsing of the text. However, soon after, a number of foundational results appeared that ultimately unified the landscape of repetitiveness-aware compression under a new measure, $\delta$, called the \emph{substring complexity}~\cite{DBLP:journals/algorithmica/RaskhodnikovaRRS13,DBLP:journals/tit/KociumakaNP23}, defined as $\delta = \max_t d_t(T)/t$, where $d_t(T)$ denotes the number of distinct substrings of length $t$ in $T$.
The measure $\delta$ serves as a lower bound for $z$, $r$, and various other repetitiveness measures. 
From an upper bound perspective, we have $z = O(\delta \log (n/\delta))$ and $r = O(\delta \log \delta \log (n/\delta))$~\cite{DBLP:journals/cacm/KempaK22,DBLP:conf/stoc/KempaP18}.
This shows that these measures are all within polylogarithmic factors of each other, despite appearing to be quite different. 
Therefore, in terms of $\delta$, the space of the encoding  in~\cite{DBLP:journals/jacm/GagieNP20} becomes $O(\delta \log \delta \log^2(n/\delta))$ words.

\item  It has been shown that $O(\delta \log (n/\delta))$ words represent the \emph{optimal repetitiveness-aware space} to encode $T$, meaning this amount of space (i.e., in terms of $n$ and $\delta$) is both necessary and sufficient~\cite{DBLP:journals/tit/KociumakaNP23}. 
Kempa and Kociumaka~\cite{DBLP:conf/focs/KempaK23} then showed that it is possible to achieve an encoding within this space while supporting both $\SA$ and $\ISA$ queries in $O(\log^{4+\epsilon} n)$ time.

\item In a recent paper~\cite{DBLP:conf/soda/KempaK26}, Kempa and Kociumaka showed that a slightly faster query time of $O(\log n / \log \log n)$ is possible for both $\SA$ and $\ISA$ queries by increasing the space to $O(\delta \log^{4+\epsilon} n)$. They also proved a matching lower bound, showing that any data structure using $O(\delta \log^{O(1)} n)$ space must incur a query time of $\Omega(\log n / \log \log n)$ for both $\SA$ and $\ISA$.

\end{enumerate}
%%%
As observed, all these existing works, across different space regimes, achieve the same asymptotic space and time complexities for $\SA$ and $\ISA$ queries, including the lower bound mentioned above.
This raises the following fundamental question that we address in this paper.
 \textit{Given the same (asymptotic) space budget, do $\SA$ and $\ISA$ queries exhibit the same query time complexity?}

\vspace{-4mm}

\subsection{Hardness of $\SA$ Queries in Compact Space}
 Note that the work by Kempa and Kociumaka~\cite{DBLP:conf/soda/KempaK26} resolves  the above question for $O(\delta \log^{O(1)} n)$ space.
In another  work~\cite{DBLP:conf/soda/KempaK26a}, they proved an equivalence between several string problems (including SA/ISA queries) and certain types of \emph{prefix range queries}, a newly introduced class of problems that appears more basic. They established this equivalence across four aspects: \emph{space usage, query time, construction time, and construction working space}. Rather than viewing these string problems from a hardness perspective, their work suggests that one can instead focus on improving these simpler abstract queries to achieve further advances in fundamental string problems.
%Despite these abstractions, the CSA trade-offs listed in Table~\ref{table_CSA} have remained unimproved.

Another relevant work is a reduction by Chien {\it et al.}~\cite{DBLP:journals/algorithmica/ChienHSTV15} connecting the \emph{two-dimensional orthogonal range reporting} problem to the text indexing problem. The text indexing problem asks to preprocess a text into a data structure that can report all occurrences of a query pattern as substrings of the text. Since the suffix array (together with the text) can be used to support pattern matching, the following lemma can be easily obtained from their reduction.
%; therefore, we sketch the proof in Appendix~\ref{sec_appendix} for completeness.

\begin{restatable}{lemma}{hardnesslemma}
\label{lem_reduction}
If there exists an $O(n)+S(n)$-bit data structure supporting $\SA$ queries in $Q(n)$ time for a text of length $n$ over a constant alphabet, then one can construct an $O(n\log n)+S(\Theta(n\log n))$-bit structure for two-dimensional range reporting over $n$ points in an $[n]\times[n]$ grid, supporting queries in $O(\log^4 n+k\cdot Q(n))$ time, where $k$ denotes the output size. We may safely assume that $Q(n)=O(\log n)$, since an $O(\log n)$-time solution using $O(n)$ bits is already known to exist.
\end{restatable}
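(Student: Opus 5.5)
We plan to follow the reduction of Chien {\it et al.}~\cite{DBLP:journals/algorithmica/ChienHSTV15} directly. Given $n$ points in $[n]\times[n]$, first rank-reduce so that all $x$- and $y$-coordinates are distinct (rank space), storing the coordinate translation tables in $O(n\log n)$ bits. Then build a text over a constant alphabet as follows. For each point $(x,y)$, write the binary encoding of $x$ ($\lceil\log n\rceil$ bits) reversed, followed by a separator, followed by the binary encoding of $y$, and concatenate these gadgets (with suitable delimiters, e.g.\ using symbols from a constant-size alphabet $\{0,1,\#,\$\}$). The text has length $N=\Theta(n\log n)$. A query rectangle $[x_1,x_2]\times[y_1,y_2]$ is decomposed into $O(\log^2 n)$ canonical pairs of dyadic intervals: $[x_1,x_2]$ splits into $O(\log n)$ dyadic blocks, each a set of $x$ sharing a binary prefix, hence a reversed-$x$ suffix; similarly $[y_1,y_2]$ splits into $O(\log n)$ $y$-prefixes. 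For each pair $(p_x,p_y)$, the pattern $P=\mathrm{rev}(p_x)\,\#\,p_y$ occurs in the text exactly at the gadgets of points with $x$ in the block of $p_x$ and $y$ in the block of $p_y$. Occurrences of distinct pairs are disjoint because the dyadic blocks are disjoint, so each reported point is found exactly once.

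For the structure, store the hypothetical $\SA$ data structure for the text of length $N$, using $O(N)+S(N)=O(n\log n)+S(\Theta(n\log n))$ bits. In addition, store the text itself (packed, $O(N)$ bits) and whatever is needed to find the $\SA$ interval of each pattern $P$ of length $O(\log n)$ without $\SA$ access. Here we use an $O(N)$-bit compressed suffix array or FM-index (e.g.\ the Belazzougui--Navarro index over a constant alphabet), which supports backward search in $O(|P|)=O(\log n)$ time. This gives the interval $[\ell,r]$ of $P$ in $O(\log n)$ time per pattern. Over $O(\log^2 n)$ patterns the total is $O(\log^3 n)$, well within $O(\log^4 n)$; the slack covers implementation details such as computing the dyadic decompositions. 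Then, for each $i\in[\ell,r]$, query $\SA[i]$ in $Q(N)$ time. Map the text position to its gadget index by division by the fixed gadget length, and output the corresponding point via the stored table in $O(1)$ time. The total time is $O(\log^4 n + k\,Q(N))$. Since $N=\Theta(n\log n)$ and we may assume $Q$ is at most logarithmic and reasonably smooth, we have $Q(N)=O(Q(n))$ up to constants.

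The final remark, that $Q(n)=O(\log n)$ may be assumed, follows from the known $O(n)$-bit FM-index over a constant alphabet with sampling parameter yielding $O(\log n)$-time $\SA$ access, e.g.\ Belazzougui--Navarro with $\tau$ chosen so that $\tau\log_\sigma n=O(\log n)$ and sample space $O(n)$ bits.

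The main obstacle is the gadget design. It must guarantee that an occurrence of $P$ cannot straddle gadget boundaries or align at the wrong offset. We plan to handle this by fixing all encodings to exactly $\lceil\log n\rceil$ bits and placing the unique separator $\#$ between the $x$- and $y$-parts, and a distinct delimiter between gadgets. Then any occurrence of $P$, which contains exactly one $\#$, is anchored at a unique gadget, with $\mathrm{rev}(p_x)$ ending just before that gadget's $\#$ and $p_y$ starting just after it. A second, minor point is to justify $Q(\Theta(n\log n))$ versus $Q(n)$. For this we rely on the polylogarithmic form of $Q$, noting that $\log(n\log n)=\Theta(\log n)$.
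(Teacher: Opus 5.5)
Your proposal is correct and follows the route the paper intends: Chien et al.'s reduction, which encodes each point as $\mathrm{rev}(\mathrm{bin}(x))\,\#\,\mathrm{bin}(y)$ in a constant-alphabet text of length $\Theta(n\log n)$, splits a query into $O(\log^2 n)$ dyadic patterns of length $O(\log n)$ whose occurrences are disjoint and anchored at the unique $\#$ of each gadget, and spends one $\SA$ query per reported point. The one difference is minor. The paper's $O(\log^4 n)$ term, and its remark that $Q(n)=O(\log n)$ may be assumed, presumably come from locating each suffix range by binary search with $O(\log n)$ $\SA$ queries of cost $O(\log n)$; your auxiliary $O(n\log n)$-bit FM-index with backward search stays within the space budget and gives the slightly better additive term $O(\log^3 n)$, while your $Q(\Theta(n\log n))=O(Q(n))$ smoothness caveat is one the statement itself implicitly makes.
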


Applying the existing trade-offs for $\SA$ queries in Table~\ref{table_CSA} to this reduction yields the following trade-offs for two-dimensional range reporting.

\begin{enumerate}
 \item An $O(n)$-word data structure with query time $O(\log^4 n + k  \log^{\epsilon} n)$.
  \item An $O(n\log\log n)$-word data structure with query time $O(\log^4 n + k  \log\log n)$.
   \item An $O(n\log^{\epsilon}n)$-word data structure with query time $O(\log^4 n + k )$.
\end{enumerate}
These bounds match those of the classic results of Chazelle~\cite{DBLP:journals/siamcomp/Chazelle88}, except that the $\log^4 n$ term appearing here is replaced by a smaller term in his results. Chazelle's result has witnessed several improvements
~\cite{agarwal1999geometric,DBLP:conf/focs/AlstrupBR00,DBLP:conf/wads/Nekrich07,DBLP:conf/compgeom/ChanLP11}, all of which have focused on reducing the initial term; the per-output reporting time has remained unchanged.
Therefore, improving any of the bounds in Table~\ref{table_CSA} for $\SA$ queries would lead to improved per-output reporting time for two-dimensional range reporting, indicating that substantially faster $\SA$ queries are unlikely without a breakthrough in computational geometry.

\vspace{-3mm}
 \subsection{Our Results}

We assume the word RAM model of computation with word size $\Theta(\log n)$. 
In this model, any block of $b \le \log n$ bits fits within $O(1)$ words and can therefore be accessed and interpreted as an integer in the range $[0 \dd 2^b)$ in constant time. Consequently, our text $T[0 \dd n)$ can be stored 
in a \emph{packed form}
using $n \lceil \log \sigma \rceil$ bits,  where each character is represented
using a fixed-length encoding of $\lceil\log\sigma\rceil$ bits. Then any substring of $O(\log_{\sigma} n)$ characters fits within $O(1)$ machine words and can be accessed in constant time. 
The following is our main result.

\begin{theorem}[\bf ISA and SA Encodings]\label{thm_main}
Let $\tau$ be an integer-valued parameter and $\epsilon$ an arbitrarily small fixed  constant such that $1\leq\tau=O(\log\sigma)$ and $0<\epsilon<1$.
Suppose the text $T[0 \dd n)$ over the integer alphabet $[0 \dd \sigma)$ is
stored  in packed form, such that any substring of length
$O(\log_{\sigma} n)$ fits in $O(1)$ machine words and can be accessed in
$O(1)$ time. 
We can maintain an auxiliary data structure with the following space--time trade-offs for inverse suffix array $(\ISA)$ queries.

\begin{itemize}
    \item $O((n/\tau)\log\sigma + n\log^{\epsilon}\sigma)$ bits  and
    query time $t_{\ISA}=O(\tau+\log\log n/\log\log\sigma)$;
\item $O((n/\tau)\log\sigma + n\log\log\sigma)$ bits and
 $t_{\ISA}=O(\tau+\log\log_{\sigma} n)$;
\item     $O((n/\tau)\log\sigma + n\log\log\log\sigma)$ bits and
  $t_{\ISA}=O((\tau+\log\log\log \sigma)\log\log\sigma+ \log\log n)$.
\end{itemize}
In all cases, suffix array $(\SA)$ queries can be supported in time
$t_{\SA}=O(t_{\ISA}\tau\log_{\sigma} n)$.

\end{theorem}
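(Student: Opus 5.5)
The plan is to combine a sparse sample of $\ISA$ with a per-prefix decomposition: each $\ISA$ query becomes a bounded chain of predecessor/rank queries on sorted integer sets. Let $\ell=\Theta(\log_\sigma n)$ be such that a length-$\ell$ substring fits in a word, and let $d=\tau\ell$.

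\textbf{Sampling and the reduction to predecessor queries.}
\begin{itemize}
\item Store $\ISA[j]$ for all $j\equiv 0 \pmod d$. This costs $(n/d)\log n=(n/\tau)\log\sigma$ bits.
\item Use the following identity. Let $w=T[j\dd j+\ell)$, let $C[w]$ be the number of suffixes whose length-$\ell$ prefix is smaller than $w$, and let $S_w=\{\ISA[k+\ell] : T[k\dd k+\ell)=w\}$. Then
\[
\ISA[j]=C[w]+\rank_{S_w}(\ISA[j+\ell]).
\]
This holds because suffixes sharing the prefix $w$ are ordered by the ranks of their suffixes shifted by $\ell$.
\item Store $C$ as a bitvector marking group boundaries in $\SA$ order, in $O(n)$ bits. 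Since $w$ is read from the packed text in $O(1)$ time, $C[w]$ costs a single rank query.
\item Answer a query at $j$ by jumping to the next sample $j'\ge j$ and then unwinding back to $j$ along the arithmetic progression $j, j+\ell, \dots$. Positions not aligned to $\ell$ need one final step with a shorter prefix, handled by the same structure with a variable-length $w$. The chain has at most $\tau$ steps, and each step is a rank query on some $S_w$.
\end{itemize}

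\textbf{Encoding the sets.} Each $S_w$ is sorted, and the sets partition $[0\dd n)$. I will store them as one interleaved structure, namely the array of $\ISA[k+\ell]$ listed in $\SA$ order within each group. Each $S_w$ is then increasing and can be encoded by gaps, giving $O(n)$ bits plus bucket headers.

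The three trade-offs then come from the choice of predecessor machinery inside each group:
\begin{itemize}
\item To get $O(\log\log n/\log\log\sigma)$ query time with $n\log^\epsilon\sigma$ extra bits, bucket each set into blocks of $2^{\Theta(\log^\epsilon\sigma)}$ elements. Store a fusion/y-fast style top structure over the bucket heads, which costs $O(\log n)$ bits per head, hence $o(1)$ bits per element. Within a bucket, store keys truncated to $\Theta(\log^\epsilon\sigma)$ bits per element so that an in-bucket rank takes $O(1)$ time by word-packed comparison.
\item The $n\log\log\sigma$ and $n\log\log\log\sigma$ variants replace the within-bucket representation. They use $\Theta(\log\log\sigma)$-bit (respectively $\Theta(\log\log\log\sigma)$-bit) truncated keys, searched recursively, which produces the $\log\log_\sigma n$ and the $(\tau+\log\log\log\sigma)\log\log\sigma$ terms.
\end{itemize}

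\textbf{Main obstacle: achieving additive rather than multiplicative $\tau$.} A naive chain costs $\tau$ times the predecessor time, but the statement claims $O(\tau+\log\log n/\log\log\sigma)$. My first attempt is a \emph{finger} argument.
\begin{itemize}
\item The expensive top-level search is performed only once, at the sample $j'$, where I can store extra information: the bucket index of $\ISA[j']$ in the relevant set, as $O(\log n)$ bits per sample.
\item Each subsequent step maps a (bucket, offset) pair to a (bucket, offset) pair in the next set using $O(1)$ table lookups on packed keys.
\item To make this work, I would also store, for every element, which bucket of the predecessor-group set its rank falls into. This needs to be $O(\log^\epsilon\sigma)$ bits per element, achieved by storing it relative to a coarse sample.
\end{itemize}
Failing this, I would lengthen the jump to $\ell'=\ell\log^\epsilon\sigma$ so that the logarithmic part is amortized.

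\textbf{$\SA$ queries.} Store $\SA$ samples for text positions $\equiv 0 \pmod d$, marked by a bitvector over ranks. This costs $(n/\tau)\log\sigma+O(n)$ bits. From rank $i$:
\begin{enumerate}
\item Repeatedly apply $\Psi$ until reaching a marked rank, which takes at most $d=\tau\log_\sigma n$ steps.
\item Compute $\Psi(i)$ as follows. Find the one-character group of $i$ through the boundary bitvector (with character-level groups). Then perform a select in that group's set of successor ranks, which is the inverse of the rank query above and costs $O(t_{\ISA})$ with the same structures.
\end{enumerate}
This yields $t_{\SA}=O(t_{\ISA}\,\tau\log_\sigma n)$.
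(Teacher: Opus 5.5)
There is a genuine gap, and it lies in the space accounting, which is the whole content of the theorem. Your identity $\ISA[j]=C[w]+\rank_{S_w}(\ISA[j+\ell])$ is correct (it is the $\LF^{H}$ formula of Section~\ref{sec_3.3} with $H=\ell$), and reading $w$ from the packed text is the right instinct. But the sets $S_w$ are just $\Psi^{\ell}$ split into $\ell$-gram groups. Gap-encoding them costs about $\sum_w |S_w|\log(n/|S_w|)$ bits, not $O(n)$: this is $\Theta(n\ell\log\sigma)=\Theta(n\log n)$ when the $\ell$-grams are spread roughly uniformly. Even keeping only $\ell$-aligned positions leaves $\Theta(n\log\sigma)$ bits, which is over budget once $\tau=\omega(1)$.

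The way out is not a better predecessor structure but the observation that no predecessor search is needed. The argument $\ISA[j+\ell]$ is \emph{guaranteed} to lie in $S_w$, and the text position is known at every step of the chain, so only a partial rank is required. A monotone minimal perfect hash function over the concatenated characteristic vectors of the groups returns $C[w]+\rank_{S_w}(\cdot)$ in $O(1)$ time with $O(\log(\ell\log\sigma))$ bits per element, without storing the sets at all (Lemma~\ref{lem_COUNTplusRANK} and Observation~\ref{obs:COUNT+RANK}). This also repairs your $C[w]$ step, which cannot be read off a boundary bitvector without already knowing the group index of $w$. Once every step costs $O(1)$, the additive-versus-multiplicative $\tau$ obstacle disappears, and your speculative finger argument is unnecessary.

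The second gap is the non-aligned remainder $\beta=(j'-j)\bmod\ell$, which is where the paper spends its effort. It is not handled by \emph{the same structure with a variable-length $w$}. The sets for prefix length $\beta$ differ from those for length $\ell$, and the step lands on an arbitrary position, so you would need a separate structure over all $n$ rows for every $\beta\in[1\dd\ell)$. The paper instead writes $\beta$ in base $\Delta$. For each level $k$ and digit $a\in[1\dd\Delta)$ it keeps a structure $DS'(\Delta^k,a\Delta^k)$ (Corollary~\ref{cor:DS'}), built only over rows whose $\SA$ value is divisible by $\Delta^k$. Each such structure computes the rank among the sampled rows and translates it to a global rank by select on the marking bit vector. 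The long jumps are applied first, then the digits from most to least significant, so every jump starts and ends on a $\Delta^k$-sampled position. This gives $O(\tau+\log_\Delta\log_\sigma n)$ time in $O((n/\tau)\log\sigma+n\Delta\log(\Delta\log\sigma))$ bits. The three trade-offs come from the choice of $\Delta$ (plus a slower MMPHF at the lowest levels for the third). They do not come from truncated-key predecessor search, and you never actually derive the $\log\log_\sigma n$ and $(\tau+\log\log\log\sigma)\log\log\sigma$ terms you attribute to it.

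Finally, your $\SA$ procedure needs $\Psi(i)$ at a rank whose text position is unknown. That requires select in $S_c$, i.e., an explicit $\Psi$ of about $n\log\sigma$ bits. It cannot come from the $\ISA$ machinery, which needs $\SA[i]$ as input. Instead, treat $\ISA$ as a black-box permutation and apply Munro et al.'s cycle shortcuts with $s=\tau\lceil\log_\sigma n\rceil$. This costs $O((n/s)\log n)=O((n/\tau)\log\sigma)$ bits and $O(s)$ $\ISA$ evaluations per $\SA$ query.
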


The packed  text takes $n \lceil \log\sigma\rceil =n\log\sigma+O(n)$ bits.\footnote{We could also use higher-order entropy-compressed  representations
 supporting efficient substring retrieval~\cite{DBLP:conf/soda/FerraginaV07}.} 
By setting $\tau=\omega(1)$ to be an arbitrarily slowly growing function of $\sigma$, we can achieve three trade-offs while keeping the space complexity at $n\log\sigma+o(n\log\sigma)+O(n)$ bits (which is succinct when $\sigma$ is a growing function of $n$): $t_{\ISA}$ can be made doubly logarithmic, while $t_{\SA}$ remains near-logarithmic, as in the FM-index.
We also prove a lower bound that justifies the $\tau$ term in the $\SA$ query time over a broad range of parameters.

\paragraph{Suffix Tree and LCE Queries.}
From the work of Sadakane~\cite{DBLP:journals/mst/Sadakane07} (see also~\cite{russo2011fully}), it follows immediately that, by maintaining additional structures occupying a total of $O(n)$ bits (so the overall space complexity remains unchanged), we can obtain an encoding of the \emph{suffix tree}~\cite{DBLP:conf/focs/Weiner73} that supports a variety of functionalities. This includes supporting \emph{Longest Common Extension} (LCE) queries in time $O(t_{\SA}+t_{\ISA})$, where $\LCE(i,j)$ returns the length of the longest common prefix of the suffixes starting at positions $i$ and $j$ (see~\cite{DBLP:conf/stoc/KempaK19} for a compact space structure with $O(1)$ query time).
%\footnote{The $O(n\log\sigma)$-bit data structure of~\cite{DBLP:conf/stoc/KempaK19} supports $\LCE$ queries in $O(1)$ time.}
We refer to Table 1 in~\cite{russo2011fully} for a comprehensive list of other supported functionalities.
Thus, our most space-efficient encoding of the suffix tree can be obtained by setting
$\tau = \lceil \log\sigma/\log\log\log\sigma \rceil $ in the third ISA trade-off.
%(and simplifying).

\begin{corollary}[\bf Suffix Tree  in Just $O(n\log\log\log\sigma)$ Extra Bits]\label{cor:CST}
Suppose the text $T[0 \dd n) \in [0 \dd \sigma)^n$ is stored separately in packed form. Then, an encoding of its suffix tree  supporting various basic operations in polylogarithmic time can be obtained using $O(n\log\log\log\sigma)$ additional bits. 
%In particular, it supports $\ISA$ queries in $O(\log\sigma+\log\log n)$ time, and $\SA$ and $\LCE$ queries in $O(\log n(\log\sigma+\log\log n))$ time.
\end{corollary}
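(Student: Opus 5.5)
The corollary follows by instantiating Theorem~\ref{thm_main} and adding Sadakane's compressed suffix tree machinery. I would take the third trade-off of Theorem~\ref{thm_main} with $\tau=\lceil \log\sigma/\log\log\log\sigma\rceil$. This choice satisfies the hypothesis $1\le\tau=O(\log\sigma)$, since $\log\log\log\sigma\ge 1$ under the clamped logarithm. The term $(n/\tau)\log\sigma$ is then $O(n\log\log\log\sigma)$, so the auxiliary structure takes $O(n\log\log\log\sigma)$ bits in total.

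Next I would bound the query times. Since $\tau\le\log\sigma$, the third trade-off gives
\[
t_{\ISA}=O\bigl((\log\sigma+\log\log\log\sigma)\log\log\sigma+\log\log n\bigr)=O(\log\sigma\log\log\sigma+\log\log n).
\]
Because $\sigma\le n$, this is polylogarithmic in $n$. Then
\[
t_{\SA}=O(t_{\ISA}\,\tau\log_\sigma n)=O(t_{\ISA}\log n),
\]
using $\tau\log_\sigma n\le \log\sigma\cdot\log n/\log\sigma=\log n$. So $t_{\SA}$ is also polylogarithmic.

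Finally, I would plug these into Sadakane's framework~\cite{DBLP:journals/mst/Sadakane07} (in the fully functional form of~\cite{russo2011fully}). That framework adds $O(n)$ bits: a balanced-parentheses encoding of the tree topology and the $2n$-bit encoding of the permuted LCP array. It supports its operations with $O(1)$ or polylogarithmically many calls to $\SA$, $\ISA$, and text access. Text access is provided in $O(1)$ time by the packed text, which is stored separately as the statement permits. Each operation therefore runs in polylogarithmic time. The extra $O(n)$ bits are absorbed into $O(n\log\log\log\sigma)$, since the clamped $\log\log\log\sigma\ge 1$.

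The only delicate step is checking that the operations in Sadakane's and Russo et al.'s framework need only $\SA$, $\ISA$, and $\LCE$ (or $\Psi$/LF-type) primitives, with polylogarithmic overhead. If some operation needs $\Psi$ or $\LF$, I would simulate $\Psi(i)=\ISA[\SA[i]+1]$ and $\LF(i)=\ISA[\SA[i]-1]$ in time $O(t_{\SA}+t_{\ISA})$, which is still polylogarithmic. Everything else is direct substitution.
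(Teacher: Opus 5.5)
Your proposal matches the paper's argument: take the third trade-off of Theorem~\ref{thm_main} with $\tau=\lceil\log\sigma/\log\log\log\sigma\rceil$, which gives $O(n\log\log\log\sigma)$ auxiliary bits with polylogarithmic $t_{\ISA}$ and $t_{\SA}$, then add Sadakane's $O(n)$-bit components, whose operations reduce to $\SA$, $\ISA$ and text access. One small point: because of the ceiling, $\tau\le\log\sigma$ can fail for small $\sigma$, but $\tau\le 2\log\sigma$ always holds, and that is all your asymptotic bounds need.
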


\paragraph{Optimal Time Substring Extraction.}
Since the  text itself is available in packed form, any substring of length $\ell$ can be extracted in optimal  $O(1+\ell/\log_{\sigma} n)$ time. 
In a standard FM-index, the (LF-based) substring extraction takes
$O(t_{\ISA}+\ell)$ time.
While the CSA can achieve this time bound, its leading space term is
$(1+\epsilon^{-1})n\log\sigma \geq 2n\log\sigma$ bits since $0<\epsilon\le 1$.

\paragraph{Packed Pattern Searching in (Near-)Succinct Space.}
The support for $\SA$, $\ISA$, and $\LCE$ queries also enables efficient pattern matching. Given a pattern $P$, a \emph{counting query} returns the number of occurrences of $P$ in $T$, while a \emph{reporting query} returns all such occurrences, i.e., their starting positions. The standard procedure for both types of queries is to first find the range $[sp\dd ep]$ of $P$, called the \emph{suffix range}, where $\mathrm{OCC}=\{\SA[k]\mid sp\leq k\leq ep\}$ is the set of occurrences of $P$ in $T$. The suffix range (and $\mathrm{OCC}$) is empty if $P$ has no occurrences.
The counting query then returns $|\mathrm{OCC}|$, while the reporting query outputs $\mathrm{OCC}$. 

\begin{corollary}
    For any pattern $P$ given in packed form, our data structure can find its suffix range in time $O(|P|/\log_{\sigma} n+\log n+t_{\SA}\log\log n)$ and then report its occurrences in $t_{\SA}$ time per output.
\end{corollary}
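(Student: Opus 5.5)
The plan is a two-level search: first a binary search over a sparse sample of $\SA$, which costs only packed-text work, and then a binary search inside one block of $\SA$, which costs $O(\log\log n)$ steps, each needing one $\SA$ access and one $\LCE$ query.

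\textbf{Sampling.} Fix $b=\lceil\log n\rceil$ and store the sampled values $\SA[0],\SA[b],\SA[2b],\dots$. Alongside them, store the Manber--Myers auxiliary arrays for a binary search over these $n/b$ sampled suffixes. These are the LLCP/RLCP values, i.e., the longest common prefixes between each midpoint of the implicit binary-search tree and its two boundary suffixes. Altogether this is $O((n/\log n)\log n)=O(n)$ bits, which does not change the space bound of Theorem~\ref{thm_main}. We then run the Manber--Myers search for $P$ over the samples. Each character comparison is done on the packed text $O(\log_\sigma n)$ characters at a time. With the LLCP/RLCP arrays, the matched length with $P$ never decreases. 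The search therefore takes $O(\log n)$ steps and $O(|P|/\log_\sigma n)$ total word comparisons. We do this once for the lower end $sp$ and once for the upper end $ep$. Each ends with $sp$ (resp.\ $ep$) localized to a block of at most $b$ consecutive $\SA$ positions. We also know $\ell_L$ and $\ell_R$, the longest-common-prefix lengths of $P$ with the two sampled suffixes bounding that block.

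\textbf{Search inside a block.} Here we binary search again, for $O(\log b)=O(\log\log n)$ steps, but the midpoint suffixes must now be retrieved with $\SA$ queries, at cost $t_{\SA}$ each. To avoid paying $|P|/\log_\sigma n$ per step, we replace the precomputed LLCP/RLCP values by on-the-fly $\LCE$ queries. At a step with current boundaries $L,R$ (suffix positions known), matched lengths $\ell_L,\ell_R$, and midpoint $m$, we compute $j=\SA[m]$. Suppose $\ell_L\ge\ell_R$; the other case is symmetric. We compute $h=\LCE(j,\SA[L])$, which takes $O(t_{\SA}+t_{\ISA})=O(t_{\SA})$ time via the Sadakane-style suffix-tree structure mentioned after Theorem~\ref{thm_main}. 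If $h\ne\ell_L$, the comparison of $P$ with suffix $j$ is decided immediately. It is decided by whether $h<\ell_L$ or $h>\ell_L$, together with the known order of $P$ relative to $\SA[L]$. If $h=\ell_L$, we extend the match directly on the packed text starting from offset $\ell_L$. This extension only increases $\max(\ell_L,\ell_R)$, so over the whole search the extensions cost $O(|P|/\log_\sigma n)$ plus $O(1)$ per step. The block search therefore costs $O(t_{\SA}\log\log n)$ plus terms already accounted for. Summing the two phases gives the claimed $O(|P|/\log_\sigma n+\log n+t_{\SA}\log\log n)$.

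\textbf{Reporting.} Once $[sp\dd ep]$ is known, each occurrence $\SA[k]$ is obtained by one $\SA$ query, i.e., $t_{\SA}$ time per output.

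\textbf{Main obstacle.} The delicate step is the monotonicity argument in both phases. We must show that the matched length with $P$ never has to be recomputed from scratch, so that the $|P|/\log_\sigma n$ term is paid only once. Equivalently, in the case $h=\ell_L$, resuming the comparison at offset $\ell_L$ is correct and the new matched value replaces the appropriate boundary. I would verify this by following the standard Manber--Myers case analysis, with the $\LCE$ value $h$ playing the role of the stored LLCP/RLCP entry.
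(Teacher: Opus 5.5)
Your proposal is correct and follows essentially the same route as the paper: store every $\log n$-th $\SA$ entry explicitly together with $O(n)$ bits of LCP information about these samples, binary search over the samples in $O(|P|/\log_\sigma n+\log n)$ time, and then run an $O(\log\log n)$-step binary search inside one block, where each step uses one $\SA$ query and one $\LCE$ query costing $O(t_{\SA})$. The only difference is cosmetic: you store Manber--Myers LLCP/RLCP arrays for the sampled suffixes, whereas the paper stores a sampled suffix tree that gives $O(1)$-time $\LCE$ between sampled suffixes; both take $O(n)$ bits and serve the same role, and your monotonicity argument for $\max(\ell_L,\ell_R)$ spells out details the paper leaves implicit.
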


\begin{proof}
    The standard binary search takes $O((|P|/\log_{\sigma} n+t_{\SA})\log n)$ time, which can be improved to $O(|P|/\log_{\sigma} n+t_{\SA}\log n)$ using $\LCE$ queries. To achieve the claimed complexity, we sample every $\log n$-th $\SA$ entry and store these entries explicitly, along with a sampled suffix tree that allows the $\LCE$ of any two sampled suffixes to be computed in $O(1)$ time using $O(n)$ additional bits. A primary binary search over the sampled entries reduces the search range to $O(\log n)$, after which a second binary search requires only $O(\log\log n)$ $\SA$ queries.
\end{proof}
Note that the FM-index uses the so-called \emph{backward search} procedure,
which processes the pattern one character at a time and thus takes $O(|P|)$
steps. While the CSA supports efficient packed pattern searching, its leading
space term is at least $2n\log\sigma$ bits. To the best of our knowledge, no
previous full-text index supports packed pattern searching in
$O(|P|/\log_{\sigma} n+\log^{O(1)} n)$ time while using space with
coefficient $1$ on the leading $n\log\sigma$ term.

% To the best of our knowledge, no previous text index achieves \emph{packed pattern-searching} time $O(|P|/\log_{\sigma} n+\log^{O(1)} n)$ for counting while using $(1+o(1))n\log\sigma$ bits of space.

\paragraph{Complexity Separation.}
One implication of Theorem~\ref{thm_main} is that it reveals a difference in the space--time trade-offs for $\SA$ and $\ISA$ queries across space regimes. Our result shows that $\ISA$ queries can be supported in doubly logarithmic time using compact space, whereas Lemma~\ref{lem_reduction} highlights the difficulty of achieving improved bounds for $\SA$ queries. This leads to the following.

\begin{corollary}
    In the compact-space regime, there exists a separation between the time complexities of $\SA$ and $\ISA$ queries, unless a linear-space data structure for 2D range reporting on $n$ points with query time $O(\log^4 n+k\log\log n)$ exists, where $k$ is the output size. 
\end{corollary}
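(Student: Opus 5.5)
The corollary follows by combining Theorem~\ref{thm_main} with the contrapositive of Lemma~\ref{lem_reduction}.

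\emph{ISA side.} Apply the first trade-off of Theorem~\ref{thm_main} with $\tau=\Theta(\log\log n)$, or any admissible $\tau$ with $\tau=O(\log\log n)$; when $\sigma$ is too small to allow this, take $\tau=1$. Together with the packed text, the total space is $O(n\log\sigma)$ bits, i.e., compact. The query time is $t_{\ISA}=O(\tau+\log\log n/\log\log\sigma)=O(\log\log n)$. So $\ISA$ queries run in doubly logarithmic time in compact space, and in particular for constant $\sigma$ in $O(n)$ bits.

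\emph{SA side.} Suppose, for contradiction, that there is no separation. Then some compact-space structure answers $\SA$ queries within the $\ISA$ bound, namely $O(\log\log n)$ time. Specialize it to a constant alphabet. The space becomes $O(n)$ bits, so in the notation of Lemma~\ref{lem_reduction} we have $S(n)=O(n)$ and $Q(n)=O(\log\log n)$. Lemma~\ref{lem_reduction} then yields a 2D range reporting structure for $n$ points on $[n]\times[n]$. Its space is $O(n\log n)+S(\Theta(n\log n))=O(n\log n)$ bits, i.e., $O(n)$ words. Its query time is $O(\log^4 n+k\log\log n)$. This is exactly the structure excluded by the hypothesis. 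Hence, unless such a structure exists, no compact-space $\SA$ structure matches $O(\log\log n)$, and the separation holds.

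The main point to handle carefully is what ``separation'' means, since the regime is fixed as compact space uniformly over alphabets. I would state it as follows: over constant alphabets (a special case of the compact regime), $\ISA$ achieves $O(\log\log n)$, whereas $\SA$ cannot achieve $O(\log\log n)$ under the stated assumption. I also need $S(\Theta(n\log n))$ to stay $O(n\log n)$, which holds because $S$ is linear. No other obstacles arise.
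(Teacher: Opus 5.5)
Your proposal is correct and follows the paper's own reasoning: Theorem~\ref{thm_main} gives $O(\log\log n)$-time $\ISA$ queries in compact space, and applying Lemma~\ref{lem_reduction} with $S(n)=O(n)$ and $Q(n)=O(\log\log n)$ over a constant alphabet would yield exactly the excluded $O(n)$-word, $O(\log^4 n+k\log\log n)$-time 2D range reporting structure. One small simplification: taking $\tau=1$ already gives compact space and $t_{\ISA}=O(\log\log n)$ for every $\sigma$, so you do not need the case split on $\tau=\Theta(\log\log n)$.
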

Recall that no such complexity separation exists in repetitiveness-aware space~\cite{DBLP:conf/soda/KempaK26}.

\paragraph{Roadmap.}
Section~\ref{sec:preli} presents the preliminaries. In Section~\ref{sec:3}, we first describe an implementation of the FM-index and then introduce a simple modification in which the BWT is replaced by the text itself while still supporting $\ISA$ queries. Section~\ref{sec:4} presents the required building-block data structures. Section~\ref{sec:5} derives our final results through a modification of a CSA trade-off. 
Section~\ref{sec:LB} presents our lower-bound result.
Section~\ref{sec:6:conc} concludes the paper.
%The  proof of Lemma~\ref{lem_reduction} is included in Appendix~\ref{sec_appendix}.

\section{Preliminaries}\label{sec:preli}

We will follow $0$-based indexing.
For any string $T[0 \dd n)$, let $T[i \dd j]$, with $i \leq j$, denote the substring starting at index $i$ and ending at index $j$.
For $j > i$, let $T[j \dd i]$ denote the cyclic substring obtained by concatenating $T[j \dd n)$ and $T[0 \dd i]$.
As we are dealing with an integer alphabet $[0 \dd \sigma)$, we use the usual $<$ relation to denote lexicographic order between two characters (i.e., $x < y$ iff $x$ is lexicographically smaller than $y$).

We represent each character using $\lceil \log \sigma \rceil$ bits.
We  view a string $S$ of $k$ characters over $[0 \dd \sigma)$ as an integer in
$[0 \dd 2^{k\lceil \log \sigma\rceil})$,
whose base-$2^{\lceil \log \sigma\rceil}$ representation corresponds to $S$.
Since all characters use the same number of bits, this representation preserves the lexicographic order of the strings.

\subsection{Rank and Select Queries}
Let $S[0 \dd n)$ be a string  over an alphabet $[0 \dd \pi)$, where $\pi =n^{O(1)}$. A rank operation $\rank_S(i,c)$, where $i \in [0 \dd n)$ and $c \in [0 \dd \pi)$, returns the number of occurrences of the character $c$ in  $S[0\dd i]$. A \emph{select} operation, $\select_S(j,c)$, returns the index in $S$ (if it exists) of the $j$-th occurrence of  $c$. The number of characters in $S$ that are strictly smaller than $c \in [0 \dd \pi)$ is denoted by $\COUNT_S[c]$.

That is,
\[
\COUNT_S[c] = \bigl|\{i \mid i  \in [0 \dd n) \text{ and } S[i] < c\}\bigr|.
\]
%%%
The following results consider the basic case where $S$ is a binary sequence with $m$ ones.

\begin{itemize}

\item {\bf Succinct Bit Vector:} This is an  $n+o(n)$-bit  data structure that   supports both $\rank$ and $\select$ operations in $O(1)$ time~\cite{DBLP:conf/focs/Jacobson89,DBLP:conf/soda/ClarkM96}.
   
   \item 
   \emph {\bf Indexable Dictionary:} This data structure uses $m \log(n/m) + O(m)$ bits, which is close to the information-theoretic lower bound of $\log \binom{n}{m}$ bits required to represent a binary string with $m$ ones~\cite{DBLP:journals/talg/RamanRS07}. While it does not support all operations in $O(1)$ time, it can efficiently support two specific operations in $O(1)$ time: 
   
   \begin{enumerate}
       \item $\select_S(j,1)$ for any $j$.
       \item $\rank_S(i,1)$ given that $S[i] = 1$. This operation is referred to as \emph{partial rank}.
   \end{enumerate}
We refer to recent related work for further developments on this topic~\cite{DBLP:conf/icalp/LiangZ25}.

    \item  {\bf Monotone Minimal Perfect Hash Function (MMPHF):}
This data structure requires only $O(m\log \log (n/m))$ bits, 
which can be asymptotically smaller than the space required to represent the bit vector itself.
 It operates independently of the original data~\cite{DBLP:conf/soda/BelazzouguiBPV09,DBLP:conf/alenex/BelazzouguiBPV09,DBLP:journals/talg/BelazzouguiN14} (i.e., $S$ is needed only during construction) and supports the following useful operation in $O(1)$ time: given an input $i$, it correctly returns $\rank_S(i,1)$ if $S[i]=1$, and an arbitrary value otherwise.
 Note that the data structure cannot determine whether $S[i]=1$; rather, it supports $\rank_S(i,1)$ under the promise that $S[i]=1$.
% There also exists a more space-efficient version of MMPHF, using $O(m\log \log \log (n/m))$ bits and supporting queries in $O( \log \log (n/m))$ time.
\end{itemize}

\subsection{The Burrows-Wheeler Transform and LF-Mapping}\label{sec:BWT}
A \emph{cyclic suffix} of $T[0 \dd n)$ is obtained by taking the suffix $T[i \dd n)$ and appending the prefix $T[0 \dd i)$ at the end. Formally, the $i$-th cyclic suffix, denoted $T_i$, is $T[i \dd n) \circ T[0 \dd i)$, %for $i > 0$, and $T_0 = T$, 
where $\circ$ denotes concatenation.
The \emph{$\BWT$-matrix} of $T$ is the $n \times n$ matrix $M[0 \dd n)[0 \dd n)$ whose rows consist of all cyclic suffixes of $T$, sorted lexicographically.
The \emph{Burrows--Wheeler Transform} (BWT) of $T$, denoted $\BWT[0 \dd n)$, is the string formed by taking the last column of this matrix. That is, $\BWT[0 \dd n) = M[0][n-1] \circ M[1][n-1] \circ M[2][n-1]\circ \cdots \circ M[n-1][n-1]$. 
The LF-mapping is a key operation on the BWT, defined as 
$\LF[i] = \ISA[(\SA[i] - 1) \bmod n]$. Although this definition involves $\mathrm{SA}$ and $\mathrm{ISA}$, for any $0 \leq i,j < n$, we can decide the order between $\mathrm{LF}[i]$ and $\mathrm{LF}[j]$ by simply comparing $\mathrm{BWT}[i]$ and $\mathrm{BWT}[j]$:  
\[
\mathrm{LF}[j] < \mathrm{LF}[i]
\iff
\big(\mathrm{BWT}[j] < \mathrm{BWT}[i]\big)
\;\lor\;
\big((\mathrm{BWT}[j] = \mathrm{BWT}[i]) \land (j < i\big)).
\]
This yields the following well-known expression for computing $\LF$ from  $\BWT$:

$$\LF[i] = |\{j \mid \BWT[j] < \BWT[i]\}| + |\{ j \mid j \in [0 \dd i] \text{ and } \BWT[j] = \BWT[i]\}|-1. $$
Here $-1$ accounts for 0-based indexing. 
Let $c = \BWT[i]$. Then we can write 
\[
\LF[i] = \COUNT_{\BWT}[c] + \rank_{\BWT}(i, c) - 1.
\]
Also, let $\LF^0[i]=i$ and 
$\LF^k[i]=\LF[\LF^{k-1}[i]]=\ISA[(\SA[i]-k)\bmod n]$ for any integer $k>0$.

\section{ISA Encodings with  Logarithmic Query Time}\label{sec:3}

\subsection{BWT-Based Approach} \label{sec:FMI}
In the standard approach based on the FM-index~\cite{DBLP:journals/jacm/FerraginaM05}, we first maintain an efficient data structure that supports $\LF$-mapping. We then replace $\ISA$ with a sampled variant (called sampled ISA), defined as follows.
Let $g$ be an integer parameter, which we call the \emph{sampling factor}. A sampled ISA explicitly stores $\ISA[y]$ for all text positions $y \in [0 \dd n)$ such that $y$ is divisible by $g$, to support $O(1)$-time lookup for such positions.
%Additionally, store the same for the rightmost text position (i.e., $\ISA[n-1])$.
Then, given a query index $x$, we obtain $\ISA[x]$ as follows:

\begin{enumerate}
    \item If $\ISA[x]$ is stored explicitly, return it in $O(1)$ time.
    
    \item Otherwise, let $y$ be the smallest multiple of $g$ such that $y>x$. If $y\leq n-1$, then $\ISA[y]$ is stored. Let $(i,\SA[i])=(\ISA[y],y)$ and $t=y-x<g$. It follows that the answer is $\LF^t[i]$. If $y>n-1$, set $y=0$. Then let $(i,\SA[i])=(\ISA[0],0)$ and $t=n-x<g$; it follows that the answer is $\LF^t[i]$.
\end{enumerate}

Therefore, the problem reduces to efficiently supporting iterative 
$\LF$-mapping, which in turn requires storing $\COUNT_{\BWT}$ and providing 
efficient rank operations on the $\BWT$.
The array $\COUNT_{\BWT}$ can be stored in $O(\sigma+\sigma \log(n/\sigma))$ bits 
using Elias--Fano encoding while supporting $O(1)$-time lookup.
To support efficient rank operations, several methods have been developed for 
succinctly representing the underlying string; see~\cite{DBLP:conf/soda/GrossiGV03,DBLP:journals/jda/Navarro14} for wavelet trees supporting 
$O(\log \sigma)$-time operations, and~\cite{DBLP:conf/soda/GolynskiMR06,DBLP:journals/talg/BelazzouguiN15} for representations supporting 
$O(\log\log \sigma)$-time operations.
Moreover, under $O(n\log^{O(1)} n)$-bit space, $O(1)$-time rank operations are not possible, unless 
$\sigma=\log^{O(1)}n$~\cite{DBLP:journals/talg/BelazzouguiN15}.
However, in our setting, the operation 
$\rank_{\BWT}(i,\BWT[i])$ is more restricted than a general rank operation.
Belazzougui and Navarro (see Section~3 of~\cite{DBLP:journals/talg/BelazzouguiN14}) 
exploited this property and showed that it can be supported in constant time 
by maintaining an $O(n\log\log\sigma)$-bit data structure in addition to the 
$\BWT$. Thus, $\LF$-mapping can be supported in $O(1)$ time. Consequently,
$\ISA$ queries can be answered in $O(g)$ time using 
$n\log\sigma +O(n \log\log\sigma)+ O((n/g)\log n)$
bits of space. Typically, $g$ is chosen as 
$\omega(\log_{\sigma} n)$.
As the details of the aforementioned $O(n\log\log\sigma)$-bit data structure 
are important for our purposes, we present Lemma~\ref{lem_COUNTplusRANK} (and Observation~\ref{obs:COUNT+RANK}), which simplifies the ideas from~\cite{DBLP:journals/talg/BelazzouguiN14}.

\begin{lemma}[``$\COUNT + \rank$''  Queries] \label{lem_COUNTplusRANK}
Let  $S[0 \dd n) \in [0 \dd \pi)^n$ be a string, where $\pi = n^{O(1)}$. Suppose $S$ is represented in a way that allows constant-time random access to its characters. Then we can associate it with an $O(n \log \log \pi)$-bit data structure such that, when $i$ is given as a query, we can in $O(1)$ time return $$\COUNT_S[S[i]] + \rank_S(i, S[i])-1.$$
%Alternatively, we can associate an $O(n\log\log\log\pi)$-bit data structure and support the same queries in $O(\log\log \pi )$ time.
Note that when $S=\BWT$ and $c=\BWT[i]$, the value returned is $\LF[i]$.
\end{lemma}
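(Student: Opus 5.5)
The plan is to follow the Belazzougui--Navarro idea: reduce the query to a partial-rank query inside a block of $S$, and answer that partial rank with monotone minimal perfect hash functions (MMPHFs) built per character inside each block. The $\COUNT_S[S[i]]$ term is handled separately.

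\textbf{Global part.} Partition $S$ into blocks of length $b=\pi$; if $\pi>n$ we simply take $b=n$, so $\log\log b=O(\log\log\pi)$ since $\pi=n^{O(1)}$. For each block and each character $c$, the value we need is
\[
\COUNT_S[c]+\rank_S(\text{start of block}-1,c),
\]
the number of positions of $S$ preceding the current block's occurrences of $c$ in the LF order. Storing this for every (block, character) pair is too costly. Instead I would use the standard bit-vector encoding of the pairs ordered character-major. Concatenate, over $c$ and then over blocks, the unary codes $1^{m_{c,B}}0$, where $m_{c,B}$ is the number of occurrences of $c$ in block $B$. This bitvector has $n$ ones and $\lceil n/b\rceil\pi=O(n)$ zeros, so it takes $O(n)$ bits with constant-time $\select$. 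The prefix count for the pair $(c,B)$ is then the number of ones before the zero at index $c\cdot\#\text{blocks}+B$, obtained with one $\select_0$ and a subtraction. Its total cost is $O(n)$ bits.

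\textbf{Local part.} It remains to compute $\rank$ of $S[i]$ within its block, i.e.\ the number of $j$ in block $B$ with $j\le i$ and $S[j]=S[i]$. Two cases arise.

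If $m_{c,B}$ is small, at most $(\log\pi)^{O(1)}$, build an MMPHF over the positions of $c$ in $B$. This costs $O(m\log\log(b/m))=O(m\log\log\pi)$ bits, with a universe of size $b\le\pi$. The query evaluates the MMPHF for $(c,B)$ at $i-\text{start}$; correctness holds because the promise $S[i]=c$ is guaranteed by reading $S[i]$. Locating the MMPHF for $(c,B)$ in $O(1)$ time needs a pointer table that cannot be indexed over all $\pi$ characters. I would store pointers only for the characters present in the block, indexed by local rank of the character. A trick is needed here: a single MMPHF over the set of pairs $(c,\text{pos})$ in the block, sorted by $(c,\text{pos})$, returns the rank of $(S[i],i)$ among all pairs of the block, i.e.\ the LF-like order restricted to the block. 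Subtracting the rank of the first occurrence of $c$ in the block gives the partial rank.

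This suggests a cleaner overall route. For each block, build one MMPHF on keys $c\cdot b+\text{offset}$, a universe of $\pi b$ with $b$ keys. Its space is $O(b\log\log\pi)$ bits, summing to $O(n\log\log\pi)$. It yields $r=|\{j\in B: (S[j],j)<(c,i)\}|$. The answer is then
\[
\COUNT_S[c]+\rank_S(\text{start}-1,c)+r-(\text{number of } j\in B \text{ with } S[j]<c).
\]
The quantity $\COUNT_S[c]+\rank_S(\text{start}-1,c)$ comes from the global bitvector. The number of positions of $B$ with character less than $c$ is also a prefix of the same character-major ordering, restricted to block $B$. I would store the within-block version of the character-major bitvector (the local counts $m_{c,B}$ listed by $c$) for each block. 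Its universe is $\pi$, which is too large to store plainly. However, that count equals $r-(\text{offset-rank of } i \text{ among } c\text{'s in } B)$, which is circular.

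\textbf{Main obstacle.} The hard step is recovering the number of characters $<c$ in the block without an $O(\pi)$-size table. I would handle this by changing the block size to $b=\log\pi$ (or polylog). Within-block rank of a character among its own occurrences then fits in $O(\log\log\pi)$ bits, and a direct per-position store becomes an option: the within-block partial rank of each position, stored explicitly in $O(\log b)=O(\log\log\pi)$ bits per position, avoids MMPHFs entirely. With this choice the local part is $O(n\log\log\pi)$ bits of explicit storage. The global part then keeps the $(c,B)$ ones-prefix bitvector, now with $\pi n/\log\pi$ zeros. That is too many zeros to store plainly, so I would store it as a sparse set of its $n$ ones positions via an MMPHF/Elias--Fano. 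The zero for $(c,B)$ is not guaranteed to coincide with a stored one, but we can query the first one of $(c,B)$, which exists because $S[i]=c$ occurs in $B$. The position of that one is the key $(c,B)$ followed by its index. Ranking it with a partial-rank indexable dictionary over keys $(c\cdot\#\text{blocks}+B)\cdot b+\text{local index}$, a universe of size $\pi n$ with $n$ keys, costs $O(n\log\pi)$ bits, which is too much. An MMPHF costs $O(n\log\log\pi)$ bits and returns exactly $\COUNT_S[c]+\rank_S(\text{start}-1,c)+\text{local rank}$. So the final design is one global MMPHF on keys $(c,\text{block},\text{local occurrence index})$, queried after computing the local index from the stored per-position field. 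Verifying the MMPHF space bound $O(n\log\log(\pi n/n))$ is the last check.
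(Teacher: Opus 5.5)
Your final design is correct and is essentially the paper's proof. You use one global MMPHF over the $n$ character-major keys $(c,\text{block},\text{local index})$; its returned rank is exactly $\COUNT_S[c]+\rank_S(i,c)-1$, and it costs $O(n\log\log(\pi n/n))=O(n\log\log\pi)$ bits. The paper builds the same single MMPHF over the concatenation $B_0B_1\cdots B_{\pi-1}$ of the per-character indicator vectors, i.e., over the keys $c\cdot n+j$ with $S[j]=c$.

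The blocking and the stored within-block ranks are redundant, because using the position $i$ itself as the low-order part of the key ($c\cdot n+i$) gives the same universe size and the same order. A query then needs only $S[i]$ and one MMPHF evaluation, so your earlier detours are unnecessary; note also that your unary bitvector actually has $\pi$ rather than $O(n)$ zeros when $\pi>n$.
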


\begin{proof}
For $k \in [0 \dd \pi)$, let $B_k[0 \dd n)$ denote the bit vector such that $B_k[j]=1$ if and only if $S[j]=k$, and let $B$ be the concatenation of $B_0,B_1,B_2,\dots,B_{\pi-1}$ in this order. 
An MMPHF over $B$ is our data structure in space 
\(O(|S|\log\log(|B|/|S|)) \subseteq O(n\log\log\pi)\) bits.
Given a query $i$, we first obtain $c=S[i]$ by accessing $S$. 
Note that $B[nc+i]=1$, and $\rank_B(nc+i,1)-1$ gives our answer. This rank query can be computed correctly
using the MMPHF over $B$ in $O(1)$ time.
The approach in~\cite{DBLP:journals/talg/BelazzouguiN14} differs in that it separately maintains an MMPHF for each $B_k$ and $\COUNT_S$.
% With the first version of the MMPHF, the space is
% \(O(|S|\log\log(|B|/|S|)) \subseteq O(n\log\log\pi)\) bits and the query time is \(O(1)\).
% With the second version, the space is
% \(O(|S|\log\log\log(|B|/|S|)) \subseteq O(n\log\log\log\pi)\) bits, while the query time is
% \(O(\log\log(|B|/|S|)) \subseteq O(\log\log\pi)\).
\end{proof}

\begin{observation}\label{obs:COUNT+RANK}
The data structure in Lemma~\ref{lem_COUNTplusRANK} uses $S$ only for retrieving $S[i]$ given $i$. Hence, it can be implemented without storing $S$ when the input 
is provided as $(i,S[i])$ instead of only $i$.    
\end{observation}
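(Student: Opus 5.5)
The plan is to read off from the proof of Lemma~\ref{lem_COUNTplusRANK} exactly what the query algorithm touches. The data structure there consists only of an MMPHF built over the concatenated bit vector $B = B_0 \circ B_1 \circ \cdots \circ B_{\pi-1}$, where $B_k[j]=1$ iff $S[j]=k$. I will observe that $S$ is needed during \emph{construction} (to define $B$ and build the MMPHF), but the stored object is only the MMPHF. By the properties recalled in the preliminaries, the MMPHF works independently of the original data.

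The query procedure on input $i$ has two steps:
\begin{enumerate}
\item read $c=S[i]$;
\item evaluate the MMPHF at position $nc+i$ and subtract $1$.
\end{enumerate}
The only access to $S$ is in the first step. So if the query is given as the pair $(i,S[i])$, the first step is replaced by reading $c$ from the input. The second step is then carried out unchanged in $O(1)$ time.

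For correctness I will note that the MMPHF's promise condition $B[nc+i]=1$ holds exactly because $c=S[i]$ is the true character. The MMPHF therefore returns the true value $\rank_B(nc+i,1)$. The number of ones in $B$ before block $c$ is $\COUNT_S[c]$, and within block $c$ the ones up to offset $i$ number $\rank_S(i,c)$. Hence $\rank_B(nc+i,1)-1=\COUNT_S[c]+\rank_S(i,c)-1$, the required value.

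The space is the MMPHF's $O(n\log\log\pi)$ bits, with no term for $S$. There is no real obstacle here. The one point to state explicitly is that the MMPHF never needs to verify the promise, which is why the oracle access to $S$ can be dropped once $S[i]$ is supplied by the caller.
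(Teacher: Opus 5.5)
Your proposal is correct and takes the same route as the paper. In the proof of Lemma~\ref{lem_COUNTplusRANK} the only stored object is the MMPHF over $B$, and $S$ is accessed solely to obtain $c=S[i]$ before querying position $nc+i$; your explicit check that the promise $B[nc+i]=1$ holds for the supplied character, and hence that the returned value equals $\COUNT_S[c]+\rank_S(i,c)-1$, is the justification the paper leaves implicit.
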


\subsection{BWT-Free (Text-Based) Approach}\label{sec:free}
In the above implementation of the FM-index, we can eliminate the need to store the $\BWT$ by using the text directly while still supporting $\ISA$ queries as follows. Letting $i_0=i$, where $i =\ISA[y]$, 
%our goal is to
we 
iteratively compute $i_1=\LF[i_0], i_2=\LF[i_1], i_3=\LF[i_2],\ldots, i_t=\LF[i_{t-1}]$.
%, starting from $(i_0,\SA[i_0]) =(\ISA[y],y)$.
Since $\SA[i_0] =y$ is available initially, we can compute $\SA[i_k]=\SA[\LF^k[i_0]]=((\SA[i_0]-k)\bmod n)$ for any $k$ in constant time. Therefore, for $k=0,\ldots,t-1$, we compute $i_{k+1}=\LF[i_k]$ using $(i_k,\SA[i_k])$ as follows: first, retrieve $\BWT[i_k]=T[(\SA[i_k]-1)\bmod n]$ directly from the text (instead of $\BWT$). Then, using Observation~\ref{obs:COUNT+RANK}, we query the data structure in 
Lemma~\ref{lem_COUNTplusRANK} (built over the $\BWT$) with input 
$(i_k,\BWT[i_k])$ to obtain $i_{k+1}$.
The space--time complexities of this alternative approach remain the same 
as before.
However, the benefits will become apparent as we build upon this  idea. 

For notational convenience, let $\LFm$ denote the (modified LF) function mapping a pair $(j,\SA[j])$ to $(j',\SA[j'])$, where $j'=\LF[j]$, and let $\LFm^k$ denote the result of applying $\LFm$ iteratively $k\geq 0$ times.
Formally,

\[
\LFm^k[(j, \SA[j])] = (\LF^k[j], \SA[\LF^k[j]]) = (\LF^k[j], (\SA[j]-k) \bmod n).
\]
Therefore, an ISA query ultimately reduces to computing $\LFm^t[(i, \SA[i])]$ from  input $(i,\SA[i],t)$, such that $t<g$ (the sampling factor) and $\SA[i]$ is divisible by $g$.

\paragraph{Note.} The standard (FM-index based) procedure for \(\SA\) value decoding is also based on iterative
\(\LF\)  until a sampled position containing the answer is reached. However, the iterative \(\LF\)  for \(\ISA\) has additional information
available initially (i.e., an $\SA$ value), which is what we exploit here.

\section{Building-Block Data Structures} \label{sec:4}
%In this section, we establish key building blocks for later sections.
\subsection{Constant-Time ${\LF}^H$ and $\LFm^H$ for a Fixed $H$}\label{sec_3.3}
In this section, we design structures supporting  $\LF^H$ and $\LFm^H$ operations in $O(1)$ time, where $H=O(\log_{\sigma} n)$ is a fixed integer parameter.
Let $\BWT'[0 \dd n)$ be a generalization of $\BWT$, where $\BWT'[i]$ consists of the last $H$ characters of the $i$-th row of the BWT-matrix $M$. Specifically,
\[
\BWT'[i] = M[i][n-H] \circ M[i][n-H+1] \circ \dots \circ M[i][n-1].
\]
Since each character of $\BWT'$ is a string of length $H$ over the original alphabet $[0 \dd \sigma)$, with each character represented using $\lceil \log \sigma\rceil$ bits, we can equivalently view $\BWT'[0 \dd n)$ as a string of length $n$ over the alphabet $[0 \dd 2^{H\lceil \log\sigma\rceil})$.
For any $0 \le i,j < n$, the order between $\mathrm{LF}^H[i]$ and $\mathrm{LF}^H[j]$ can be determined simply by comparing $\BWT'[i]$ and $\BWT'[j]$. In particular, we have
$$
\mathrm{LF}^H[j] < \mathrm{LF}^H[i]
\iff
\big(\BWT'[j] < \BWT'[i]\big)
\;\lor\;
\big(\BWT'[j] = \BWT'[i] \text{ and } j < i\big).
$$
By a simple induction on $H$ (with the base case established in Section~\ref{sec:BWT}) we obtain the following, where $c=\BWT'[i]$:
$$
\mathrm{LF}^H[i] = \mathrm{COUNT}_{\BWT'}[c] + \operatorname{rank}_{\BWT'}(i, c)-1.
$$

\begin{lemma} \label{lem_41}
Let $H = O(\log_{\sigma} n)$  be an integer parameter.
There exists an 
$O(nH\log\sigma)$-bit data structure that supports $\LF^H$ in $O(1)$ 
time. Therefore, it also supports $\LF^{tH}$ iteratively in $O(t)$ time.
\end{lemma}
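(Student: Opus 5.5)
The plan is to apply Lemma~\ref{lem_COUNTplusRANK} to the string $S=\BWT'$, viewed as a string of length $n$ over the alphabet $[0 \dd \pi)$ with $\pi=2^{H\lceil\log\sigma\rceil}$. Since $H=O(\log_\sigma n)$, we have $H\lceil\log\sigma\rceil=O(\log n)$, so $\pi=n^{O(1)}$ and the lemma's hypothesis holds. We also get $\log\log\pi=O(\log(H\log\sigma))$.

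By the identity derived just before the statement, for $c=\BWT'[i]$ we have $\LF^H[i]=\COUNT_{\BWT'}[c]+\rank_{\BWT'}(i,c)-1$. This is exactly the value returned by the ``$\COUNT+\rank$'' structure of Lemma~\ref{lem_COUNTplusRANK}. So once constant-time random access to $\BWT'$ is available, a query is one access followed by one MMPHF evaluation, which takes $O(1)$ time.

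For random access, I would simply store $\BWT'$ explicitly in packed form. Each entry occupies $H\lceil\log\sigma\rceil=O(\log n)$ bits, so it fits in $O(1)$ words and can be read in $O(1)$ time. The total is $nH\lceil\log\sigma\rceil=O(nH\log\sigma)$ bits. The auxiliary MMPHF adds $O(n\log\log\pi)=O(n\log(H\log\sigma))$ bits, which is $O(nH\log\sigma)$ since $\log x\le x$. The total space is therefore $O(nH\log\sigma)$ bits.

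The iterated claim is then immediate: $\LF^{tH}[i]$ is obtained by applying $\LF^H$ $t$ times, each application costing $O(1)$, for $O(t)$ time overall. There is no real obstacle here. The only point needing care is checking that $\pi=n^{O(1)}$, so that Lemma~\ref{lem_COUNTplusRANK} applies and its words fit in $O(1)$ machine words. The correctness of the $\LF^H$ formula, which rests on the induction giving the order-preservation property of $\BWT'$, is taken from the preceding discussion. (In the $\LFm^H$ variant one could avoid storing $\BWT'$ by using Observation~\ref{obs:COUNT+RANK} and reading $T[\SA[j]-H \dd \SA[j]-1]$ from the packed text, but that is not needed for this lemma.)
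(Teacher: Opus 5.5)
Your proposal is correct and matches the paper's proof essentially step for step. The paper likewise stores $\BWT'$ in packed form together with the structure of Lemma~\ref{lem_COUNTplusRANK} built over $\BWT'$. It answers a query by reading $c=\BWT'[i]$ and passing $(i,c)$ to that structure, and it bounds the space by $n\log 2^{H\lceil\log\sigma\rceil}+O(n\log\log 2^{H\lceil\log\sigma\rceil})\subseteq O(nH\log\sigma)$ bits.
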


\begin{proof}
Maintain $\BWT'$ in \emph{packed form}, together with the  structure from Lemma~\ref{lem_COUNTplusRANK} built over $\BWT'$. The total space is $n\log 2^{H\lceil \log\sigma\rceil} + O(n\log \log 2^{H\lceil \log\sigma\rceil}) \subseteq 
O(nH\log\sigma)$ bits. Given a query input $i$, we first obtain $c=\BWT'[i]$ by accessing $\BWT'$. Since $H=O(\log_{\sigma} n)$,   this access time is a constant. We then query the associated data structure with input $(i,c)$, which returns $\LF^H[i]$.
\end{proof}

\begin{corollary}
\label{cor_42}
Let $H=O(\log_{\sigma} n)$ be an integer parameter.
There exists an $O(n\log(H\log\sigma))$-bit data structure that supports $\LFm^H$ queries in $O(1)$ time, provided that the text is available (in packed form).
Therefore, it also supports $\LFm^{tH}$ iteratively in $O(t)$ time.
\end{corollary}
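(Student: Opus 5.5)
The plan is to combine Lemma~\ref{lem_41} with the BWT-free idea of Section~\ref{sec:free}, using Observation~\ref{obs:COUNT+RANK} so that $\BWT'$ itself is never stored. We build the structure of Lemma~\ref{lem_COUNTplusRANK} over the string $\BWT'[0\dd n)$, whose alphabet is $[0\dd 2^{H\lceil\log\sigma\rceil})$; call this size $\pi$. Since $H=O(\log_\sigma n)$, we have $\pi=n^{O(1)}$, so the lemma applies. Its space is $O(n\log\log\pi)=O(n\log(H\lceil\log\sigma\rceil))=O(n\log(H\log\sigma))$ bits, which is exactly the claimed bound. By Observation~\ref{obs:COUNT+RANK}, the structure only needs the pair $(i,\BWT'[i])$ at query time and never reads $\BWT'$ directly, so we drop the packed $\BWT'$ that Lemma~\ref{lem_41} stored.

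To answer $\LFm^H[(j,\SA[j])]$, we recover $\BWT'[j]$ from the packed text. Row $j$ of the matrix $M$ is the cyclic suffix $T_{\SA[j]}$, so its last $H$ characters form the cyclic substring $T[(\SA[j]-H)\bmod n \dd (\SA[j]-1)\bmod n]$. This string has $H=O(\log_\sigma n)$ characters, so by the packed-text assumption it fits in $O(1)$ words and can be read in $O(1)$ time. The one case to handle is wrap-around, when $\SA[j]<H$: we then read two pieces, $T[n-(H-\SA[j])\dd n)$ and $T[0\dd \SA[j])$, and concatenate them with $O(1)$ shifts and masks. The encoding is the base-$2^{\lceil\log\sigma\rceil}$ integer from the preliminaries, which preserves lexicographic order and matches how the structure was built.

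We then query the structure with $(j,c)$, where $c=\BWT'[j]$. By the identity $\LF^H[j]=\COUNT_{\BWT'}[c]+\rank_{\BWT'}(j,c)-1$ established above Lemma~\ref{lem_41}, it returns $j'=\LF^H[j]$ in $O(1)$ time. We output $(j',(\SA[j]-H)\bmod n)$, which is correct because $\SA[\LF^H[j]]=(\SA[j]-H)\bmod n$. The iterated claim follows by feeding each output pair back in as the next input: $t$ applications give $\LFm^{tH}$ in $O(t)$ time.

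The only step needing care is checking that the MMPHF in Lemma~\ref{lem_COUNTplusRANK} is built over the correct $\BWT'$ and queried only under its promise. The promise is that $B[nc+j]=1$, which holds because $c$ really equals $\BWT'[j]$. This is guaranteed by the cyclic-substring identity, so the MMPHF answer is valid even though the structure cannot verify membership itself.
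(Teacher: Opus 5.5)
Your proposal is correct and is the paper's proof: drop the packed $\BWT'$ from Lemma~\ref{lem_41}, recover $\BWT'[i]$ as the cyclic substring $T[(\SA[i]-H)\bmod n\dd(\SA[i]-1)\bmod n]$ from the packed text, and query the Lemma~\ref{lem_COUNTplusRANK} structure with the pair $(i,\BWT'[i])$ as permitted by Observation~\ref{obs:COUNT+RANK}. Your explicit wrap-around handling, the check that $\pi=n^{O(1)}$, the space computation, and the remark about the MMPHF promise fill in details the paper leaves implicit.
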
 

\begin{proof}
Modify the structure in Lemma~\ref{lem_41} by not storing $\BWT'$.
Then while processing a query with input $(i,\SA[i])$, we recover $\BWT'[i]=T[((\SA[i]-H)\bmod n)\dd((\SA[i]-1)\bmod n)]$ from the text rather than accessing $\BWT'$, and proceed as before.
\end{proof}

\subsection{Constant-Time ${\LF}^H$ and $\LFm^H$ at $h$-Sampled Positions}\label{sec_3.4}
Here, we introduce an additional integer parameter 
$h \geq 1$ such that $h$ divides both $H$ and $n$. The goal is to achieve a space-efficient solution when $\LF^H$ and $\LFm^H$ queries are restricted to those $i$ for which $h$ divides $\SA[i]$, and we call such an $\SA[i]$ an \emph{$h$-sampled} (text) position. Note that if $\SA[i]$ is $h$-sampled, then $((\SA[i]-H) \bmod n)$ is also $h$-sampled.

\paragraph{Sampling the BWT-Matrix.}
Consider sampling the BWT-matrix $M$ by selecting only those rows $i$ for which the corresponding cyclic suffix $T_{\mathrm{SA}[i]}$ starts at a position divisible by $h$; that is, $\mathrm{SA}[i]$ is  $h$-sampled. Let $M'[0 \dd n/h)[0 \dd n)$ be the resulting matrix.  Based on $M'$, we next define a sampled suffix array $\SA'$, a bit vector $R$, and  $\BWT''$ (see Table~\ref{table:BWTsampled} for an illustration).

The sampled suffix array $\SA'[0 \dd n/h)$ consists of those entries in $\SA$ whose values are divisible by $h$, forming a  subsequence of $\SA$. The bit vector $R[0 \dd n)$ is defined by setting $R[i] = 1$ if and only if $\SA[i]$ belongs to $\SA'$ (equivalently, if $\SA[i]$ is divisible by $h$); this provides a mapping between positions in $\SA'$ and the sampled positions of $\SA$.
$\BWT''[0 \dd n/h)$ is a subsequence of $\BWT'$ (defined in Section~\ref{sec_3.3}) of length $n/h$, such that 
$\BWT''[i]$ 
consists of the last $H$ characters of the $i$-th row 
of $M'$. Specifically,
\[
\BWT''[i] = M'[i][n-H] \circ M'[i][n-H+1] \circ \dots \circ M'[i][n-1] \in [0 \dd 2^{H \lceil \log\sigma \rceil}).
\]

\begin{table}[h]
\centering
\renewcommand{\arraystretch}{1.2}

\begin{tabular}{c|c|c|l
@{\hspace{1.2cm}}c|c|l|c}
\hline
\multicolumn{4}{c}{\textbf{BWT-Matrix}} &
\multicolumn{4}{c}{\textbf{Sampled Matrix ($h=2, H=4$)}} \\
\hline
Row & SA & R & Cyclic suffix &
Row & SA$'$ & Sampled row & $\BWT''$ \\
\hline
0  & 11 & 0 & \$mississippi &
  &    &                      &       \\
1  & 10 & 1 & i\$mississipp &
0 & 10 & i\$missis\underline{sipp} & sipp \\
2  & 7  & 0 & ippi\$mississ &
  &    &                      &       \\
3  & 4  & 1 & issippi\$miss &
1 & 4  & issippi\$\underline{miss} & miss \\
4  & 1  & 0 & ississippi\$m &
  &    &                      &       \\
5  & 0  & 1 & mississippi\$ &
2 & 0  & mississi\underline{ppi\$} & ppi\$ \\
6  & 9  & 0 & pi\$mississip &
  &    &                      &       \\
7  & 8  & 1 & ppi\$mississi &
3 & 8  & ppi\$miss\underline{issi} & issi \\
8  & 6  & 1 & sippi\$missis &
4 & 6  & sippi\$mi\underline{ssis} & ssis \\
9  & 3  & 0 & sissippi\$mis &
  &    &                      &       \\
10 & 5  & 0 & ssippi\$missi &
  &    &                      &       \\
11 & 2  & 1 & ssissippi\$mi &
5 & 2  & ssissipp\underline{i\$mi} & i\$mi \\
\hline
\end{tabular}

\caption{Sampling of the BWT-matrix of  $T[0 \dd 12)=\text{mississippi\$}$ with $h=2$ and $H=4$.}
\label{table:BWTsampled}
\end{table}

% \begin{table}[h]
% \centering
% \renewcommand{\arraystretch}{1.2}

% \begin{tabular}{c|c|c|>{}l
% @{\hspace{1.5cm}}c|c|>{}l}
% \hline
% \multicolumn{4}{c}{\textbf{BWT Matrix}} &
% \multicolumn{3}{c}{\textbf{Sampled Matrix ($h=2$)}} \\
% \hline
% Row & SA & R & Circular suffix &
% Row & SA$'$ & Sampled row \\
% \hline
% 0  & 11 & 0 & \$mississippi &
%   &    &  \\
% 1  & 10 & 1 & i\$mississipp &
% 0 & 10 & i\$missis\underline{sipp} \\
% 2  & 7  & 0 & ippi\$mississ &
%   &    &  \\
% 3  & 4  & 1 & issippi\$miss &
% 1 & 4  & issippi\$\underline{miss} \\
% 4  & 1  & 0 & ississippi\$m &
%   &    &  \\
% 5  & 0  & 1 & mississippi\$ &
% 2 & 0  & mississi\underline{ppi\$} \\
% 6  & 9  & 0 & pi\$mississip &
%   &    &  \\
% 7  & 8  & 1 & ppi\$mississi &
% 3 & 8  & ppi\$miss\underline{issi} \\
% 8  & 6  & 1 & sippi\$missis &
% 4 & 6  & sippi\$mi\underline{ssis} \\
% 9  & 3  & 0 & sissippi\$mis &
%   &    &  \\
% 10 & 5  & 0 & ssippi\$missi &
%   &    &  \\
% 11 & 2  & 1 & ssissippi\$mi &
% 5 & 2  & ssissipp\underline{i\$mi} \\
% \hline
% \end{tabular}

% \caption{Sampling of the BWT matrix of the text $T[0 \dd 12)= \text{mississippi\$}$ with $h=2$ and $H=4$, where 
% $\BWT = \text{ipssm\$pissii}$ and 
% $\BWT'' =\text{sipp}\circ \text{miss}\circ \text{ppi\$} \circ\text{issi} \circ\text{ssis} \circ\text{i\$mi}$. }
% \label{table:BWTsampled}
% \end{table}

\begin{lemma} \label{lem:DS}
Let $h$ and $H$ be integer parameters (fixed at construction time) such that 
$H = O(\log_{\sigma} n)$ and $h$ divides both $H$ and $n$. 
There exists a data structure of size $O((H/h)n\log\sigma)$ bits that, given 
an index $i$ such that $\SA[i]$ is divisible by $h$ (i.e., $\SA[i]$ is $h$-sampled), computes $\LF^H[i]$ in 
$O(1)$ time. We call this data structure $DS(h,H)$.
%%%
Since $\SA[i]$ is $h$-sampled, $((\SA[i]-H) \mod n)$ is also $h$-sampled. Therefore, $\LF^{tH}[i]$ can be computed iteratively in $O(t)$ time.
 \end{lemma}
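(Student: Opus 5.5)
We store three components: $\BWT''[0 \dd n/h)$ in packed form, the structure of Lemma~\ref{lem_COUNTplusRANK} built over $\BWT''$, and the partial-rank/select structure over the bit vector $R$ described below. Given $i$ with $\SA[i]$ $h$-sampled, the query runs as follows.

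First, we translate $i$ into its row in $M'$, namely $i'=\rank_R(i,1)-1$.

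Second, we read $c=\BWT''[i']=\BWT'[i]$ in $O(1)$ time. This is possible because $H=O(\log_\sigma n)$, so $c$ occupies $O(1)$ words.

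Third, we query Lemma~\ref{lem_COUNTplusRANK} with $i'$ and obtain $j'=\COUNT_{\BWT''}[c]+\rank_{\BWT''}(i',c)-1$.

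Finally, we return $\select_R(j'+1,1)$ as $\LF^H[i]$.

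The core of the correctness argument is the claim that $j'$ is exactly the rank, among $h$-sampled rows, of the row $\LF^H[i]$. Let $i$ and $k$ both be $h$-sampled rows. The rows $\LF^H[i]$ and $\LF^H[k]$ correspond to text positions $\SA[i]-H$ and $\SA[k]-H$ (mod $n$). Because $h$ divides $H$ and $n$, these positions are again $h$-sampled. Conversely, every $h$-sampled row $r$ equals $\LF^H$ of the $h$-sampled row for position $\SA[r]+H$. Hence $\LF^H$ restricted to sampled rows is a bijection onto the sampled rows.

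The order criterion displayed in Section~\ref{sec_3.3} says that $\LF^H[k]<\LF^H[i]$ if and only if $\BWT'[k]<\BWT'[i]$, or the two are equal and $k<i$. Restricted to sampled rows, this reads: if and only if $\BWT''[k']<\BWT''[i']$, or they are equal and $k'<i'$. Ranking inside the sampled set therefore gives exactly $\COUNT_{\BWT''}[c]+\rank_{\BWT''}(i',c)-1$, which is $j'$. The conversion between a sampled rank and a row of $\SA$ is then exactly what $R$ supplies through rank and select. This is the step needing the most care: we must make sure the bijection holds, so that no sampled row is missed or added. The divisibility of both $H$ and $n$ by $h$ is used precisely here.

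Space. $\BWT''$ takes $(n/h)\cdot H\lceil\log\sigma\rceil=O((H/h)n\log\sigma)$ bits. Lemma~\ref{lem_COUNTplusRANK} over a string of length $n/h$ and alphabet size $2^{H\lceil\log\sigma\rceil}$ costs $O((n/h)\log(H\log\sigma))$ bits, which is dominated by the previous term.

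For $R$ we need care, since a plain bit vector costs $n+o(n)$ bits, and that may exceed the budget when $H/h<1/\log\sigma$ fails to hold. But $H/h\ge 1$, so $(H/h)n\log\sigma\ge n$, and a succinct bit vector with $O(1)$ rank and select fits within the bound. Alternatively, we can use the indexable dictionary: we only ever need partial rank (here $R[i]=1$) and select, and it takes $(n/h)\log h+O(n/h)$ bits, which is within budget.

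All steps take $O(1)$ time. For the iterated version, the output row is again $h$-sampled, with position $\SA[i]-H$, so $t$ applications give $\LF^{tH}[i]$ in $O(t)$ time.
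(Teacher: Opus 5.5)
Your proposal is correct and follows the paper's proof essentially step for step. It stores the same three components ($\BWT''$ in packed form, the structure of Lemma~\ref{lem_COUNTplusRANK} over $\BWT''$, and $R$ with partial rank and select), runs the same query (partial rank into $M'$, the COUNT+rank query on $\BWT''$, then select back into $M$), and does the same space accounting. Your explicit bijection argument for $\LF^H$ on $h$-sampled rows spells out what the paper compresses into ``reasoning analogous to that used earlier.'' Your note that a plain succinct bit vector for $R$ also fits the budget is a valid alternative to the paper's indexable dictionary, since $h \mid H$ gives $H/h \ge 1$ (the parenthetical ``fails to hold'' wording there is garbled but harmless).
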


\begin{proof}
Maintain $\BWT''$ in  \emph{packed form}, the  structure  in Lemma~\ref{lem_COUNTplusRANK} built over $\BWT''$, and 
the bit vector $R$ as an
indexable dictionary. 
Then, given an input $i$, we perform the following steps.

\begin{enumerate}
    \item  Compute $i'=\rank_R(i,1)-1$ via a partial rank query ($R[i]=1$ since $\SA[i]$ is $h$-sampled), which is the row in $M'$ corresponding to 
    $T_{\SA[i]}$ (recall that we follow $0$-based indexing).

    \item Since $h$ divides $H$ and $n$, the cyclic suffix $T_{(\SA[i]-H) \bmod n}$ must also be present in $M'$, say at row $j'$. By reasoning analogous to that used earlier, we compute $j'$ as follows:
    \[
        j' = \COUNT_{\BWT''}[c] + \rank_{\BWT''}(i', c)-1,\quad \text{where } c = \BWT''[i'].
    \]
      As before, we first determine $c$ by accessing $\BWT''$, then compute $j'$ using the structure from Lemma~\ref{lem_COUNTplusRANK} over $\BWT''$.

    \item  From $j'$, find $j = \select_R(j'+1, 1)$, which is the row in $M$ corresponding to $T_{(\SA[i]-H) \bmod n}$.  

    \item Finally, return $j$ as $\LF^H[i]$.
\end{enumerate}
Each step takes $O(1)$ time, so the total time is also $O(1)$. 
The space required for storing $\BWT''$ is $(n/h)\log 2^{H\lceil \log\sigma\rceil}$ bits.
The auxiliary data structure from Lemma~\ref{lem_COUNTplusRANK} requires
$O((n/h) \log\log 2^{H\lceil \log\sigma\rceil})$ bits and the indexable dictionary over $R$ requires $O((n/h) \log h)$ bits. Therefore, the  total space is
$O((n/h) \cdot (H\log\sigma + \log (H\log \sigma) +
 \log h))  \subseteq O((H/h) n\log\sigma)$ bits.
\end{proof}

\begin{corollary} \label{cor:DS'}
Let $h$ and $H$ be integer parameters (fixed at construction time) such that 
$H = O(\log_{\sigma} n)$ and $h$ divides both $H$ and $n$.
There exists a data structure of size $O((n/h)  \log (H\log \sigma))$ bits that, given a pair $(i, \SA[i])$ such that $\SA[i]$ is divisible by $h$ (i.e., $\SA[i]$ is $h$-sampled), computes $\LFm^H[(i,\SA[i])]$ in 
$O(1)$ time, provided that the text is available (in packed form).
We call this data structure $DS'(h,H)$.

Since $\SA[i]$ is $h$-sampled, $((\SA[i]-H) \mod n)$ is also $h$-sampled. Therefore,  $\LFm^{tH}[(i, \SA[i])]$ can be computed iteratively in $O(t)$ time.
\end{corollary}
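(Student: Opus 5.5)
This follows the pattern by which Corollary~\ref{cor_42} was derived from Lemma~\ref{lem_41}. Start from $DS(h,H)$ in Lemma~\ref{lem:DS} and drop the packed copy of $\BWT''$. Keep only two components: the structure of Lemma~\ref{lem_COUNTplusRANK} built over $\BWT''$, and the indexable dictionary for $R$. By Observation~\ref{obs:COUNT+RANK}, the former never needs $\BWT''$ itself, provided each query supplies the pair $(i',\BWT''[i'])$. So the task is to recover that character from the packed text, using the extra information $\SA[i]$ that an $\LFm$ query carries.

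The query procedure on input $(i,\SA[i])$ with $h \mid \SA[i]$ is as follows.
\begin{enumerate}
\item Compute $i'=\rank_R(i,1)-1$ by a partial rank query, which is valid since $R[i]=1$.
\item Row $i'$ of $M'$ is the cyclic suffix $T_{\SA[i]}$, so its last $H$ characters are
\[
\BWT''[i']=T[((\SA[i]-H)\bmod n)\dd((\SA[i]-1)\bmod n)].
\]
Since $H=O(\log_\sigma n)$, this is a (possibly cyclic) substring of $O(\log_\sigma n)$ characters. It can be read from the packed text in $O(1)$ time; a wrap-around costs at most two word reads and a shift.
\item Feed $(i',c)$ with $c=\BWT''[i']$ to the Lemma~\ref{lem_COUNTplusRANK} structure. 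This yields $j'=\COUNT_{\BWT''}[c]+\rank_{\BWT''}(i',c)-1$.
\item Compute $j=\select_R(j'+1,1)$.
\item Output $(j,(\SA[i]-H)\bmod n)$.
\end{enumerate}
Correctness is exactly the argument in the proof of Lemma~\ref{lem:DS}, which shows that $j=\LF^H[i]$. The second coordinate is correct by the identity $\SA[\LF^H[i]]=(\SA[i]-H)\bmod n$. Every step takes $O(1)$ time.

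For space, the Lemma~\ref{lem_COUNTplusRANK} structure over a string of length $n/h$ with alphabet size $\pi=2^{H\lceil\log\sigma\rceil}$ uses $O((n/h)\log\log\pi)=O((n/h)\log(H\log\sigma))$ bits. The requirement $\pi=n^{O(1)}$ holds because $H=O(\log_\sigma n)$. The indexable dictionary for $R$ has $n/h$ ones among $n$ positions, so it costs $O((n/h)\log h)+O(n/h)$ bits.

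The one point needing care is this $\log h$ term, since the claimed bound is $O((n/h)\log(H\log\sigma))$. It is absorbed because $h$ divides $H$, so $h\le H$ and $\log h\le\log H\le\log(H\log\sigma)$. (Here $\log$ is the clamped logarithm, so $\log(H\log\sigma)\ge 1$ and the $O(n/h)$ term is absorbed as well.)

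Finally, iteration: the output's $\SA$-value $(\SA[i]-H)\bmod n$ is again divisible by $h$, because $h$ divides both $H$ and $n$. So the output is a valid input for the next step, and $t$ applications compute $\LFm^{tH}[(i,\SA[i])]$ in $O(t)$ time.
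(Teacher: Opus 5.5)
Your proposal is correct and follows the paper's proof: drop $\BWT''$ from $DS(h,H)$ and recover $\BWT''[i']=T[((\SA[i]-H)\bmod n)\dd((\SA[i]-1)\bmod n)]$ from the packed text, then run the Lemma~\ref{lem:DS} procedure unchanged. You also spell out two points the paper leaves implicit: that $h\mid H$ gives $\log h\le\log(H\log\sigma)$, which absorbs the indexable-dictionary term, and that the second output coordinate is $(\SA[i]-H)\bmod n$.
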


\begin{proof}
Modify the structure in Lemma~\ref{lem:DS} by not storing $\BWT''$.
The resulting space is 
$O((n/h) \cdot (\log(H\log \sigma) +
 \log h))  \subseteq O((n/h) \log(H \log\sigma))$ bits. 
While processing a query $(i,\SA[i])$,  recover $\BWT''[i']=T[((\SA[i]-H)\bmod n)\dd((\SA[i]-1)\bmod n)]$ from the text, and proceed  as before.
\end{proof}

\section{ISA Encodings with Doubly Logarithmic Query Time}\label{sec:5}

\paragraph{Adjusting the Text Length.}
Let $D=\max\{2,2^{\lfloor\log\log_{\sigma} n\rfloor}\}$, so that $D$ is a power of $2$ and $D=\Theta(\log_{\sigma} n)$.
Our first task is to modify the text, if necessary, so that its length is divisible by $D$. We achieve this by \emph{prepending} $d$ copies of an arbitrary character $c\neq \$$, where $d=D\lceil n/D\rceil-n<D$. This modification is purely conceptual, so the original text  in packed form need not be modified. 

\begin{lemma}
An \emph{SA/ISA} query on $T$ can be reduced to the corresponding query on the modified text in $O(1)$ time using an interface structure of space $O(n)$ bits.    
\end{lemma}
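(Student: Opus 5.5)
Let $\hat T = c^d \circ T$ be the modified text of length $\hat n = n+d$, with $\hat\SA$ and $\hat\ISA$ its suffix array and inverse. We compare suffixes of $\hat T$ with those of $T$ and use the ordinary (non-cyclic) suffix order. Since $T$ ends with the unique smallest character $\$$, this order coincides with the cyclic order for both texts, because $\$$ still occurs once and at the end.

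\textbf{Key observation.} For $j\in[0\dd n)$, the suffix $\hat T[d+j\dd \hat n)$ equals $T[j\dd n)$. So the $n$ suffixes of $\hat T$ starting at positions $\ge d$ are exactly the suffixes of $T$, in the same relative order. The remaining $d<D$ suffixes $\hat T[k\dd\hat n)=c^{d-k}\circ T$, for $k<d$, are interleaved among them at some ranks.

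\textbf{Interface structure.} Mark these extra suffixes in a bit vector $X[0\dd \hat n)$ with $X[r]=1$ iff $\hat\SA[r]<d$. Store $X$ as a succinct bit vector, using $\hat n+o(\hat n)=O(n)$ bits, with $O(1)$-time rank and select on both zeros and ones. Here $\hat n < n + D = O(n)$.

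\textbf{Reductions.}
\begin{itemize}
\item For an ISA query on $j$: compute $r=\hat\ISA[d+j]$ and return $\rank_X(r,0)-1$, the number of $T$-suffixes preceding it. This is correct because removing the extra rows from $\hat\SA$ leaves exactly $\SA$ shifted by $d$.
\item For an SA query on $i$: compute $r=\select_X(i+1,0)$ and return $\hat\SA[r]-d$.
\end{itemize}
Each reduction uses one query on $\hat T$ plus $O(1)$ extra work.

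\textbf{Access to $\hat T$.} Any substring of length $O(\log_\sigma n)$ of $\hat T$ can be produced in $O(1)$ time from the packed $T$. The only case needing care is a substring overlapping the prefix $c^d$, or wrapping around cyclically. Here we concatenate a shifted copy of $c$ repeated (a constant we precompute, since $d<D=O(\log_\sigma n)$ characters fit in a word) with a word read from $T$, using shifts and masks in $O(1)$ time.

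The main point to verify is that prepending does not break the standard assumptions: $\$$ is still unique, last, and smallest, so suffix order and cyclic order agree for $\hat T$. Also, $c\neq\$$ guarantees that the extra suffixes are distinct from, and correctly ordered relative to, the original ones. Beyond that the argument is routine rank/select bookkeeping.
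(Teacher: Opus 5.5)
Your proposal is correct and matches the paper's proof, up to complementing the bit vector: the paper stores $B$ with $B[i]=1$ iff $\SA'[i]\ge d$ and answers $\SA[i]=\SA'[\select_B(i+1,1)]-d$ and $\ISA[i]=\rank_B(\ISA'[i+d],1)-1$, which is exactly your rank/select on the zeros of $X$. Your extra remark on producing $O(\log_\sigma n)$-length substrings of the modified text from the packed $T$ in $O(1)$ time covers a detail the paper leaves implicit when it calls the modification ``purely conceptual.''
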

\begin{proof}
Let $T'[0 \dd n')$ be the modified text (here $n'=n+d$), and let $\SA'[0 \dd n')$ and $\ISA'[0 \dd n')$ denote its suffix array and inverse suffix array, respectively. Let $B[0 \dd n')$ be a bit vector such that $B[i]=1$ if and only if $\SA'[i]\geq d$, i.e., $T'[\SA'[i]\dd n')$ is also a suffix of $T$.
Naturally, for $i\in[0\dd n)$, $\SA[i]=\SA'[i']-d$, where $i'=\select_B(i+1,1)$, while $\ISA[i]=\rank_B(\ISA'[i+d],1)-1$.
Our interface data structure essentially stores $B$ in $O(n)$ bits and supports rank/select queries in $O(1)$ time.
\end{proof}

This interface structure, which does not affect the overall space complexity, allows us to design our auxiliary structure assuming that the text length is divisible by $D$. Therefore, throughout the remainder of this section, we continue to use $T$ and $n$ to denote the modified text and its length, respectively, for convenience.

\subsection{Reconstructing a CSA Space–Time Trade-off} \label{sec:CSA}
Analogous to the LF-mapping in the FM-index~\cite{DBLP:journals/jacm/FerraginaM05}, the central function of the CSA 
is the $\psi$ function~\cite{DBLP:journals/siamcomp/GrossiV05}, defined as $\psi[i]=\ISA[(\SA[i]+1)\bmod n]$. Since $\psi$ is analogous to the $\LF$-mapping, by replacing $\psi$ with $\LF$ and making appropriate modifications to the CSA framework, we can achieve the same space--time trade-off. To this end, this section presents the complete construction and analysis of (a slightly more general form of) one of the CSA space--time trade-offs. Then in the next section, we show how further modifications lead to the main result of this paper.

\paragraph{Data Structure.}
Let $\Delta \in [2,D]$ be a power of $2$ and 
$\ell=\lfloor\log_{\Delta}D\rfloor$;  recall that $D$ is a power of $2$
and $D=\Theta(\log_\sigma n)$.  We also set the sampling parameter for $\ISA$ to $g=\tau D$, where
$\tau$ is an integer-valued parameter such that 
$1 \leq \tau =O(\log \sigma)$.
The data structure stores the sampled $\ISA$, using $O((n/g)\log n)\subseteq O((n/\tau)\log\sigma)$ bits, together with the structures $DS(\cdot,\cdot)$ from Lemma~\ref{lem:DS} for all pairs of parameters listed below whose second component is at most $D$. Note that the choice of $D$ is based on the fact that $D$ characters of the
text can be read in $O(1)$ time.

\begin{align*}
&(1,1), (1,2), (1,3), \dots, (1, \Delta-1) \\
&(\Delta, \Delta), (\Delta, 2\Delta), (\Delta, 3\Delta), \dots, (\Delta, (\Delta-1)\Delta) \\
&(\Delta^2, \Delta^2), (\Delta^2, 2\Delta^2), (\Delta^2, 3\Delta^2), \dots, (\Delta^2, (\Delta-1)\Delta^2) \\
&\vdots \\
&(\Delta^{\ell}, \Delta^{\ell}), (\Delta^{\ell}, 2\Delta^{\ell}), (\Delta^{\ell}, 3\Delta^{\ell}), \dots, (\Delta^{\ell}, (\Delta-1)\Delta^{\ell})
\end{align*}

Additionally, we maintain $DS(D,D)$. The total space for  all $DS(\cdot,\cdot)$ structures is
\[
n \log \sigma+\sum_{k=0}^{\ell}  \sum_{j=1}^{\Delta-1} j  n \log \sigma \leq \ell \Delta^2 \cdot n \log \sigma = O(
n \Delta^2   \log\sigma \cdot \log_{\Delta} \log_{\sigma}n ) \text{ bits.}
\]

\paragraph{The Query Algorithm.}
Recall that our input is $(i, \SA[i], t)$ for some $t < g$, where $\SA[i]$  is divisible by $g$, and
the task is to compute $\LF^t[i]$. 
Let $\alpha \geq 0$ and $\beta \in [0, D)$ be integers, such that $t=\alpha D+\beta$.
Compute the base-$\Delta$ representation of $\beta$, which is   
 the unique sequence of digits $0 \le t_0, t_1, \dots, t_\ell < \Delta$ such that 
$
\beta = t_0 + t_1 \Delta + t_2 \Delta^2 + \dots + t_\ell \Delta^\ell
$  (this step can be done in $O(\ell)$ time).
Next compute $i'_{\ell+1}=\LF^{\alpha D}[i]$, and then iteratively compute
\[
i'_k=\LF^{t_k\Delta^k}[i'_{k+1}],
\quad k=\ell,\ell-1,\ldots,0,
\]
until we obtain $i'_0 =\LF^{\alpha D+\beta}[i]$ (the desired answer).

We now describe the implementation details. 
Since $\SA[i]$ is divisible by $g$ and therefore also by $D$, the position $\SA[i]$ is $D$-sampled. Hence, we can query the data structure $DS(D,D)$ to compute $i'_{\ell+1}=\LF^{\alpha D}[i]$ by applying $\LF^D$ exactly $\alpha$ times. Note that if  $\alpha=0$, we can simply set  $i'_{\ell+1}=i$.

To compute $i'_{\ell}$, observe that $\SA[i'_{\ell+1}]=(\SA[i]-\alpha D)\bmod n$ is divisible by $\Delta^{\ell}$, and hence $i'_{\ell+1}$ is $\Delta^{\ell}$-sampled. Thus, if $t_{\ell}\neq 0$, we query $DS(\Delta^{\ell},t_{\ell}\Delta^{\ell})$ on $i'_{\ell+1}$ to obtain
$i'_{\ell}=\LF^{t_{\ell}\Delta^{\ell}}[i'_{\ell+1}]$.
If $t_{\ell}=0$, we simply set $i'_{\ell}=i'_{\ell+1}$.

To compute $i'_k$ for each $k=\ell-1,\ell-2,\ldots,0$, observe that $\SA[i'_{k+1}]=(\SA[i]-\alpha D-\sum_{j=k+1}^{\ell}t_j\Delta^j)\bmod n$, which is divisible by $\Delta^k$, implying that $\SA[i'_{k+1}]$ is $\Delta^k$-sampled. Thus, if $t_k \neq 0$, we  query $DS(\Delta^k,t_k\Delta^k)$ with input $i'_{k+1}$ to obtain $i'_k=\LF^{t_k\Delta^k}[i'_{k+1}]$.
If $t_{k}=0$, we simply set $i'_{k}=i'_{k+1}$.

The overall time can be bounded as 
$\alpha+\ell = O(g/D+\log_{\Delta} D)   \subseteq O(g/\log_{\sigma} n+\log_{\Delta}\log_{\sigma} n)$. Finally, the result below follows by substituting $g=\tau D$.

\begin{lemma}
There exists an ISA representation using 
$O((n/\tau)\log \sigma+ n\Delta^2  \log\sigma \cdot \log_{\Delta}\log_{\sigma} n)$ bits
of space and supporting queries in 
$O(\tau+\log_{\Delta}\log_{\sigma} n)$ time.
\end{lemma}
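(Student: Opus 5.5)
The plan is to assemble the lemma directly from the construction and query procedure just described. We fix a power of two $\Delta\in[2,D]$, set $\ell=\lfloor\log_\Delta D\rfloor$ and $g=\tau D$, and store three things: the sampled $\ISA$ at multiples of $g$, the structure $DS(D,D)$ of Lemma~\ref{lem:DS}, and $DS(\Delta^k, j\Delta^k)$ for $0\le k\le \ell$ and $1\le j<\Delta$ with $j\Delta^k\le D$. A query for $\ISA[x]$ is reduced as in Section~\ref{sec:FMI}. If $\ISA[x]$ is not sampled, we take the next sampled position $y$ (wrapping to $0$), set $t=(y-x)\bmod n<g$, and output $\LF^t[\ISA[y]]$, using that $\LF^t[i]=\ISA[(\SA[i]-t)\bmod n]$. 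By the preceding interface lemma we may assume $D\mid n$, so every $\Delta^k$ (a power of two at most $D$) divides $n$, and $\Delta^k$ divides $j\Delta^k$. Hence every structure in the family satisfies the divisibility hypotheses of Lemma~\ref{lem:DS}.

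For correctness, I would write $t=\alpha D+\beta$ and expand $\beta$ in base $\Delta$ with digits $t_0,\dots,t_\ell$. Since $D$ is a power of two and $\Delta^{\ell}\le D<\Delta^{\ell+1}$, this uses at most $\ell+1$ digits. Each structure $DS(\Delta^k,t_k\Delta^k)$ that is queried has second parameter at most $\beta<D$, so it was indeed stored. The key invariant is that before step $k$ the current text position is $(\SA[i]-\alpha D-\sum_{j>k}t_j\Delta^j)\bmod n$. This position is divisible by $\Delta^k$ because $g$, $D$ and all higher powers $\Delta^{j}$ with $j>k$ are multiples of $\Delta^k$, and $\Delta^k\mid n$. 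So each query is made at a $\Delta^k$-sampled position, as Lemma~\ref{lem:DS} requires. The first phase applies $DS(D,D)$ $\alpha$ times from a $g$-sampled, hence $D$-sampled, position. Composing the powers of $\LF$ gives $\LF^{\alpha D+\beta}[i]$.

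For the time bound, each call to a $DS$ structure costs $O(1)$ by Lemma~\ref{lem:DS}. The first phase uses $\alpha\le t/D<\tau$ calls, and the digit phase uses $\ell+1$ calls plus $O(\ell)$ work to extract the digits. With $D=\Theta(\log_\sigma n)$, $\ell=O(\log_\Delta\log_\sigma n)$, so the total is $O(\tau+\log_\Delta\log_\sigma n)$.

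For the space, the sampled $\ISA$ takes $O((n/g)\log n)=O((n/\tau)\log\sigma)$ bits, since $D=\Theta(\log n/\log\sigma)$. By Lemma~\ref{lem:DS}, $DS(\Delta^k,j\Delta^k)$ takes $O(jn\log\sigma)$ bits because $H/h=j$, and $DS(D,D)$ takes $O(n\log\sigma)$ bits. Summing over $j<\Delta$ and over the $\ell+1$ levels gives $O(\ell\Delta^2 n\log\sigma)$, which is the second term in the statement. I expect the main obstacle to be this divisibility bookkeeping rather than any calculation: checking that the wrap-around case $y=0$ still starts from a $g$-sampled position, and that padding makes $D\mid n$ so that $\Delta^k\mid n$ and subtracting modulo $n$ preserves divisibility.
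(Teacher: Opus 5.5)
Your proposal is correct and follows essentially the same route as the paper: a $g=\tau D$-sampled $\ISA$, the family $DS(\Delta^k, j\Delta^k)$ with second parameter at most $D$ plus $DS(D,D)$, the decomposition $t=\alpha D+\beta$ with base-$\Delta$ digits, and the same $O(\tau+\ell)$ time and $O(\ell\Delta^2 n\log\sigma)$ space accounting. You check the divisibility hypotheses of Lemma~\ref{lem:DS} explicitly, namely $\Delta^k \mid D \mid n$ and the wrap-around start at position $0$; this is more careful than the paper's presentation but does not change the argument.
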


Setting $\tau=1$ and $\Delta=2$ yields the second CSA trade-off in Table~\ref{table_CSA}. 
Alternatively, setting $\tau=1$ and $\Delta=\max\{2,\log_{\sigma}^{\epsilon/2} n\}$, rounded up to the next power of $2$, yields the third result in Table~\ref{table_CSA}.

\vspace{-4mm}
\subsection{Final Data Structures}
We modify the solution from Section~\ref{sec:CSA} as follows.  
For each of the listed pairs, instead of using the LF-based structure $DS(\cdot,\cdot)$ described in Lemma~\ref{lem:DS}, we use the $\LFm$-based structure $DS'(\cdot,\cdot)$ described in Corollary~\ref{cor:DS'}.  
Note that there are at most $(\ell+1)(\Delta-1)+1$ such pairs, and hence we maintain the same number of versions of $DS'(\cdot,\cdot)$.  
The advantage of our BWT-free (text-based) approach is that 
        \emph{ the text itself is stored only once across all these auxiliary structures}.

\paragraph{Query Algorithm.}
We follow essentially the same procedure as before to compute
\(i'_{\ell+1}, i'_{\ell}, \dots, i'_0\). However, each \(\LF\) operation is
replaced by the corresponding \(\LFm\) operation by leveraging the fact that
the SA values, i.e., \( \SA[i'_{\ell+1}], \SA[i'_{\ell}], \dots, \SA[i'_0]\), are readily available.
Specifically, a query with input \(q\) to \(DS(a,b)\) for computing
\(\LF^b[q]\) is replaced by a query with input \((q,\SA[q])\) to \(DS'(a,b)\)
for computing \(\LFm^b[(q,\SA[q])]\). This is possible because the initial
input is \((i,\SA[i])\), rather than only \(i\).
%The query time remains unchanged.
The query time remains $O(\tau+\log_{\Delta}\log_{\sigma} n)$, as before.

\paragraph{Bounding the Space.}
The combined space of  all $DS'(\cdot,\cdot)$ structures (in bits) is:
\[
\begin{aligned}
\sum_{k=0}^{\ell} \sum_{j=1}^{\Delta-1} 
\frac{n}{\Delta^k} \log (j \Delta^k \log \sigma) +\frac{n}{D} \log (D \log\sigma)
&\leq \sum_{k=0}^{\ell} \frac{n}{\Delta^k} \cdot \Delta \log (\Delta^{k+1} \log \sigma) +n\log\log\sigma \\
&\leq  
n \Delta \log \Delta \sum_{k=0}^{\ell} \frac{k+1}{\Delta^k}+n\log\log\sigma \Big(1+ \Delta \sum_{k=0}^{\ell} \frac{1}{\Delta^k}  \Big)
 \\
&= O\big(n \Delta \log (\Delta  \log \sigma)\big).
\end{aligned}
\]
Recall that we store only the original text, while conceptually treating it
as modified.
The sampled ISA takes $O((n/\tau)\log\sigma)$ bits, as before.

\begin{lemma}\label{lemma:ISA}
There exists a structure of size
$O((n/\tau)\log\sigma)+O(n\Delta\log(\Delta\log\sigma))$ bits that, together with the text in packed form, supports $\ISA$ queries in $t_{\ISA}=O(\tau+\log_{\Delta}\log_{\sigma} n)$ time.
\end{lemma}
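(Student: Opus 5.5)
The lemma collects what was built in this subsection, so the proof assembles three parts: the data structure, the correctness of the query procedure, and the space bound.

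\emph{Data structure.} The structure consists of
\begin{itemize}
\item the sampled $\ISA$ with sampling factor $g=\tau D$;
\item the interface structure for the length adjustment, using $O(n)$ bits;
\item one copy of $DS'(a,b)$ from Corollary~\ref{cor:DS'} for every listed pair $(a,b)=(\Delta^k,j\Delta^k)$ with $0\le k\le \ell$, $1\le j\le \Delta-1$ and $j\Delta^k\le D$, together with $DS'(D,D)$.
\end{itemize}
Corollary~\ref{cor:DS'} needs $h\mid H$ and $h\mid n$, and both hold. Clearly $\Delta^k$ divides $j\Delta^k$. Since $\Delta$ and $D$ are powers of $2$ and $\Delta^k\le D$, also $\Delta^k\mid D$, and $D$ divides the adjusted $n$. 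The second component is at most $D=O(\log_\sigma n)$, as required. The $O(n)$ interface bits are absorbed into the $O(n\Delta\log(\Delta\log\sigma))$ term.

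\emph{Correctness and time.} Given $x$, use the sampled $\ISA$ (Section~\ref{sec:FMI}) to obtain a triple $(i,\SA[i],t)$ with $g\mid\SA[i]$ and $t<g$. Write $t=\alpha D+\beta$, and expand $\beta$ in base $\Delta$ with digits $t_0,\dots,t_\ell$.
\begin{enumerate}
\item Apply $DS'(D,D)$ exactly $\alpha$ times.
\item For $k=\ell,\dots,0$, apply $DS'(\Delta^k,t_k\Delta^k)$ whenever $t_k\neq0$.
\end{enumerate}
The invariant is that before step $k$ the current $\SA$ value is $(\SA[i]-\alpha D-\sum_{j>k}t_j\Delta^j)\bmod n$. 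Since $\Delta^k$ divides $g$, $D$, each $\Delta^j$ with $j>k$, and $n$, this value is $\Delta^k$-sampled, so the promise of $DS'$ holds. Each $\LFm$ call returns the next $\SA$ value as well, computed as $(\SA[q]-b)\bmod n$. Every call also needs $b\le D$ text characters, which is one $O(1)$ packed access. The final state is $\LFm^t[(i,\SA[i])]$, whose first component is $\ISA[x]$. The time is $O(\alpha+\ell+1)=O(\tau+\log_\Delta\log_\sigma n)$, because $\alpha<g/D=\tau$ and $\ell=\lfloor\log_\Delta D\rfloor$.

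\emph{Space.} The sampled $\ISA$ stores $n/g$ values of $\lceil\log n\rceil$ bits each. Since $D=\Theta(\log_\sigma n)$, this is $O\big((n/(\tau\log_\sigma n))\log n\big)=O((n/\tau)\log\sigma)$. The $DS'$ structures cost $\sum_k\sum_j (n/\Delta^k)\log(j\Delta^k\log\sigma)+(n/D)\log(D\log\sigma)$. Bound $\log(j\Delta^k\log\sigma)\le (k+1)\log\Delta+\log\log\sigma$. Summing over $j$ contributes a factor of at most $\Delta$. The geometric series $\sum_k (k+1)\Delta^{-k}$ and $\sum_k\Delta^{-k}$ are $O(1)$ because $\Delta\ge2$. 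The $DS'(D,D)$ term is $O(n\log\log\sigma)$, since $\log D/D=O(1)$. Altogether the $DS'$ structures take $O(n\Delta\log\Delta+n\Delta\log\log\sigma)=O(n\Delta\log(\Delta\log\sigma))$ bits.

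The main obstacle is the space estimate. It must not reintroduce a factor of $\ell$: the $k$-th level costs $O(n\Delta(k+1)\log\Delta/\Delta^k+\ldots)$, so the levels sum geometrically. It must also not carry any $\log n$ factor: inside $DS'$, the dictionary for $R$ costs $O((n/h)\log h)$ bits and the MMPHF costs $O((n/h)\log(H\log\sigma))$ bits, and neither depends on $n$ beyond the $n/h$ count. Both points are already built into Corollary~\ref{cor:DS'}, so beyond that, only the divisibility checks above need care.
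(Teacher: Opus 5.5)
Your proposal is correct and follows the paper's argument essentially step for step. It uses the same family of $DS'(\Delta^k, j\Delta^k)$ structures plus $DS'(D,D)$ in place of the $DS$ structures from the CSA reconstruction, and the same digit-by-digit $\LFm$ query driven by the invariant that the current $\SA$ value is $\Delta^k$-sampled. It also reaches the space bound by the same geometric-series estimate $n\Delta\log\Delta\sum_k (k+1)\Delta^{-k} + n\Delta\log\log\sigma\sum_k \Delta^{-k}$ plus $O(n\log\log\sigma)$ for $DS'(D,D)$. Your explicit divisibility checks ($\Delta^k \mid D \mid n$, $\Delta^k \mid g$) and the accounting for the $O(n)$-bit interface structure are details the paper leaves implicit, and they do not change the argument.
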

We now show how to obtain the $\ISA$ results in Theorem~\ref{thm_main} using the above lemma.
For the first $\ISA$ result, we set $\Delta=\log^{\epsilon}\sigma/\log\log\sigma$, rounded up to the next power of $2$.\footnote{More precisely, if $\log^{\epsilon}\sigma/\log\log\sigma<D$, we set $\Delta=\max\{2,\log^{\epsilon}\sigma/\log\log\sigma\}$, rounded up to the next power of $2$; otherwise, we set $\Delta=\max\{2,\log^{\epsilon}_{\sigma}n\}$, rounded up to the next power of $2$ and capped at $D$.}
For the second $\ISA$ result,  simply set $\Delta=2$.
The third ISA result is more involved, which we establish in Lemma~\ref{lemma:ISA_third}.

\begin{lemma}\label{lemma:ISA_third}
There exists a structure of size $O((n/\tau)\log\sigma)+O(n\log\log\log\sigma)$ bits that, with the text in packed form, supports $\ISA$ queries in $t_{\ISA}=O((\tau+\log\log\log \sigma)\log\log\sigma+ \log\log n)$ time.
\end{lemma}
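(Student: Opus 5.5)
The plan is to start from Lemma~\ref{lemma:ISA} with $\Delta=2$ and change only the MMPHF inside each $DS'(\cdot,\cdot)$. With $\Delta=2$ we store $DS'(2^k,2^k)$ for $k=0,\dots,\ell$, plus $DS'(D,D)$. Each of these has a single bottleneck, namely the MMPHF of Lemma~\ref{lem_COUNTplusRANK}, built over a bit vector $B$ with $m=n/h$ ones and universe $(n/h)\cdot 2^{H\lceil\log\sigma\rceil}$. That MMPHF costs $O(m\log\log(|B|/m))=O((n/h)\log(H\log\sigma))$ bits.

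I will assume the known alternative MMPHF of Belazzougui et al.~\cite{DBLP:conf/soda/BelazzouguiBPV09}. It uses $O(m\log\log\log(|B|/m))$ bits and answers queries in $O(\log\log(|B|/m))$ time. For $DS'(h,H)$ this becomes $O((n/h)\log\log(H\log\sigma))$ bits and $O(\log(H\log\sigma))$ query time. The indexable dictionary over $R$ still costs $O((n/h)\log h)$ bits and $O(1)$ time. Observation~\ref{obs:COUNT+RANK} and Corollary~\ref{cor:DS'} go through unchanged with this slower MMPHF.

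The key step is a hybrid choice of MMPHF by level. Let $L=\lceil\log\log\sigma\rceil$.
\begin{itemize}
\item \emph{Slow levels} ($2^k< L$, i.e.\ $k<\log\log\log\sigma$): use the slow MMPHF. The space is $\sum_k (n/2^k)\bigl(\log\log(2^k\log\sigma)+k\bigr)=O(n\log\log\log\sigma)$. Each query costs $O(\log(2^k\log\sigma))=O(\log\log\sigma)$. There are $O(\log\log\log\sigma)$ such levels, so these levels take $O(\log\log\log\sigma\cdot\log\log\sigma)$ time in total.
\item \emph{Fast levels} ($2^k\ge L$): use the original MMPHF. The space is $\sum_{2^k\ge L}(n/2^k)(k+\log\log\sigma)=O(n)$, because the geometric sum starts at $2^k\approx\log\log\sigma$. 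Each query takes $O(1)$. There are at most $\ell+1=O(\log\log_\sigma n)\subseteq O(\log\log n)$ such levels.
\item \emph{The structure $DS'(D,D)$}, which is applied $\alpha<\tau$ times: use the fast version when $D\ge L$, which costs $O(n)$ bits and $O(\tau)$ time. Otherwise use the slow version, which costs $O((n/D)\log\log(D\log\sigma))=O(n\log\log\log\sigma)$ bits and $O(\tau\log\log\sigma)$ time, because $\log(D\log\sigma)=O(\log\log\sigma)$ when $D<L$.
\end{itemize}

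To finish, I add the sampled ISA, which takes $O((n/\tau)\log\sigma)$ bits, and the $O(n)$-bit interface structure. Summing the three items gives $O((n/\tau)\log\sigma+n\log\log\log\sigma)$ bits of space and $O((\tau+\log\log\log\sigma)\log\log\sigma+\log\log n)$ query time, which is the statement.

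The main obstacle is the bookkeeping that justifies the cut at $2^k\approx\log\log\sigma$. Two points need checking. First, the slow MMPHF's universe-to-ones ratio really is $2^{H\lceil\log\sigma\rceil}$, so its query time is $\log(H\log\sigma)$ and not something larger. Second, every fast level pays only $O(1)$ bits per sampled element amortized; this holds because $(n/h)\log(h\log\sigma)\le (n/h)(\log h+\log\log\sigma)$ and $h\ge\log\log\sigma$. If the slow MMPHF's time bound turns out to be stated only in terms of $\log\log$ of the universe size, I would fall back to applying it to each level's $B$ separately, which gives the same ratio.
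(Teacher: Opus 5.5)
Your proposal is correct and follows the paper's proof. It fixes $\Delta=2$ and swaps in the $O(m\log\log\log(|B|/m))$-bit, $O(\log\log(|B|/m))$-time MMPHF only at levels with $2^k=O(\log\log\sigma)$; the paper cuts at $k\le\min\{k',\ell\}$ with $2^{k'-1}<\log\log\sigma\le 2^{k'}$, and your fast/slow treatment of $DS'(D,D)$, which is just level $\ell$ when $\Delta=2$, matches the paper's two-case time analysis. Your $O(n)$ bound for the fast levels is slightly tighter than the paper's $O(nk'')$; the closing fallback remark would not rescue a bound stated in terms of $\log\log|B|$, since each level already has its own $B$ and there $\log\log|B|=\Theta(\log\log n)$, but the paper relies on the same ratio-based bound you assumed, so nothing is lost.
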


\begin{proof}
% The idea is to replace the MMPHF used in the proof of Lemma~\ref{lemma:ISA} with a more space-efficient variant. The MMPHF used there requires $O(m\log\log(|B|/m))$ bits and supports queries in $O(1)$ time, where $m$ is the number of $1$-bits in the underlying bit string $B$. The  more space-efficient variant requires only $O(m\log\log\log(|B|/m))$ bits, at the cost of increasing the query time to $O(\log\log(|B|/m))$~\cite{DBLP:conf/soda/BelazzouguiBPV09,DBLP:conf/alenex/BelazzouguiBPV09,DBLP:journals/talg/BelazzouguiN14}.\footnote{The lower-bound results in ~\cite{DBLP:conf/soda/AssadiFK23,DBLP:conf/cpm/Kosolobov24} show that this space bound is asymptotically optimal.} 
While designing $DS'(h,H)$ in Corollary~\ref{cor:DS'}, we use an 
$O(m\log\log(|B|/m))$-bit
MMPHF that  supports queries in $O(1)$ time, where
$m$ denotes the number of $1$-bits in the underlying bit string $B$.
We replace this MMPHF with a more space-efficient variant that requires
$O(m\log\log\log(|B|/m))$ bits and supports queries in
$O(\log\log(|B|/m))$ time~\cite{DBLP:conf/soda/BelazzouguiBPV09,DBLP:conf/alenex/BelazzouguiBPV09,DBLP:journals/talg/BelazzouguiN14,DBLP:conf/soda/AssadiFK23,DBLP:conf/cpm/Kosolobov24}.
Consequently, the space and query-time bounds in Lemma~\ref{lem_COUNTplusRANK} become
$O(n\log\log\log\pi)$ bits and $O(\log\log\pi)$, respectively.
Accordingly, the space and query-time bounds for $DS'(h,H)$ in Corollary~\ref{cor:DS'} become
$O(n/h \cdot(\log\log(H\log\sigma)+\log h))$ bits and
$O(\log(H\log\sigma))$, respectively.
For convenience, we refer to this modified structure as $DS''(h,H)$.

We modify the construction of Lemma~\ref{lemma:ISA} as follows. Fix $\Delta=2$. Then $\ell=\log_{\Delta}D$ exactly, with no need for floors, and the data structure consists of $DS'(2^k,2^k)$ for $k=0,1,2,\ldots,\ell$. Let $k'$ be the integer satisfying
$
2^{k'-1}<\log\log\sigma\leq 2^{k'},
$
so that $k'<1+\log\log\log\sigma$.
Set $k'' =\min\{k', \ell\}$.
For $k=0,1,2,\ldots,k''$, we replace $DS'(2^k,2^k)$ with $DS''(2^k,2^k)$. Therefore, the total space used by these auxiliary data structures is

\[
\begin{aligned}
&\sum_{0 \leq k \leq k''}\frac{n}{2^k}\log\log(2^k\log\sigma)
+\sum_{k''< k \leq \ell}\frac{n}{2^k}\log(2^k\log\sigma)
+\sum_{0 \leq k\leq \ell}\frac{n}{2^k} \cdot k\\
&\qquad=
O\left(
n\log\log\log\sigma
\left(1+\sum_{0< k\leq k''}\frac{k}{2^k}\right)
+
n \cdot \frac{\log\log\sigma}{2^{k''}}
\sum_{1 \leq r\leq \ell-k''}\frac{k''+r}{2^r}
+n\right)\\
&\qquad=
O\left(
n\log\log\log\sigma
+
nk''\sum_{1 \leq r\leq \ell}\frac{r}{2^r}+n
\right)\\
&\qquad=
O\left(n\log\log\log\sigma\right) \text{ bits}.
\end{aligned}
\]
%%%
\\
To bound the query time, we distinguish two cases.

\paragraph{Case 1: $k''=k'<\ell$.}
For $k=\ell$, the corresponding data structure has $O(1)$ query time. Hence, the query time is 

$$
\begin{aligned}
\alpha+\ell+\sum_{k=0}^{k''}\log(2^k\log\sigma)
& \leq 
\tau+\ell+(k''+1)\log(2^{k''}\log\sigma)
\\
&=O\!\left(
\tau+\log\log_{\sigma}n
+\log\log\sigma\,\log\log\log\sigma
\right).
\end{aligned}
$$

\paragraph{Case 2: $k''=\ell\leq k'$.}
In this case, we have $DS''(2^k,2^k)$ for every $k\in[0,\ell]$, and
$\ell=O(\log\log\log\sigma)$. Therefore, the query time is 

$$
\begin{aligned}
\alpha\log(2^\ell\log\sigma)
+\sum_{k=0}^{\ell}\log(2^k\log\sigma)
&\leq
(\alpha+\ell+1)\log(2^\ell\log\sigma)\\
&=O\!\left(
(\tau+\log\log\log\sigma)\log\log\sigma
\right).
\end{aligned}
$$

% To bound the query time, we consider two cases: (i) $k''= k' < \ell$ and (ii) $k'' = \ell \leq k'$. 

% \begin{enumerate}
%     \item In the first case, we have data structures with $O(1)$ query time for $k=\ell$, resulting in the following time complexity.
    
% $$
% \begin{aligned}
% \alpha+\ell+\sum_{k=0}^{k''}\log(2^k\log\sigma)
% &=O\!\left(\tau+\ell+k''\log\log\sigma+(k'')^2\right)\\
% &=O\!\left(\tau+\log\log_{\sigma}n
% +\log\log\sigma\log\log\log\sigma\right).
% \end{aligned}
% $$

% \item 
% In the second case, we have $DS''(2^k, 2^k)$ for all $k \in [0,\ell]$, and $\ell =O(\log\log\log\sigma)$, resulting in the following time complexity:

% $$
% \begin{aligned}
% \alpha \log(2^{\ell} \log\sigma) +\sum_{k=0}^{\ell}\log(2^k\log\sigma)
% & \leq (\alpha +\ell+1)  \log(2^{\ell} \log\sigma) \\
% &=O\!\left((\tau+\log\log\log \sigma) \log\log {\sigma}
% \right).
% \end{aligned}
% $$
% \end{enumerate}
Therefore, combining both cases, we obtain the following overall time complexity:
$$O((\tau+\log\log\log \sigma)\log\log\sigma+ \log\log n).$$
This completes the proof.
\end{proof}

\paragraph{Supporting Suffix Array Queries.}
Consider the following result by Munro {\it et al.}~\cite{DBLP:journals/tcs/MunroRRR12}: suppose a permutation $\varphi$ of $[0\dd n)$ is stored as a black box supporting $\varphi[i]$ queries, i.e., reporting the $i$th entry of the permutation. Then, by maintaining an auxiliary structure of size $O((n/s)\log n)$ bits, $\varphi^{-1}[j]$ queries, i.e., finding the position of $j$ in $\varphi$, can be answered using $O(s)$ $\varphi[\cdot]$ queries. We set $\varphi=\ISA$, so that $\varphi^{-1}=\SA$, and choose $s=\tau \lceil \log_{\sigma} n \rceil $. The auxiliary structure uses $O((n/s)\log n)=O((n/\tau)\log\sigma)$ bits, and each $\SA$ query can be answered using $O(\tau\log_{\sigma}n)$ $\ISA$ queries.
This completes the proof of the claimed SA query bound of Theorem~\ref{thm_main}.

\section{Query-Time Lower Bound in Terms of Auxiliary Space}\label{sec:LB}
Both the suffix array and the inverse suffix array require $\Omega(n\log\sigma)$ bits in the worst case. Given either the suffix array or the inverse suffix array of $T$, together with the number of occurrences of each character in $[0\dd\sigma)$, we can reconstruct $T$. The latter information can be encoded in $o(n\log\sigma)$ bits; for example, in $O(\sigma\log(n/\sigma))$ bits using Elias--Fano representation (see~\cite{kempa2026cellSALB} for a formal proof). Hence, the claimed lower bound follows. A more interesting question is therefore: 
\begin{center}
\emph{How efficiently can various queries be supported with bounded auxiliary space?}
\end{center}
This question has been studied extensively for $\LCE$ queries~\cite{LCE1,LCE2,LCE3,LCE4}. The best-known result offers, for a parameter $\tau$, $O(\tau)$ query time using $O((n/\tau)\log n)$ bits of auxiliary space.
In contrast, to the best of our knowledge, this auxiliary-space perspective had not previously been studied for $\SA$ and $\ISA$ queries. Having established upper bounds, we now turn to lower bounds.

Our lower bound is proved in the \emph{cell-probe model} with cell size $\Theta(\log n)$ bits, where computation is free and query time is measured by the number of memory cells probed by the query algorithm~\cite{DBLP:conf/stoc/FredmanS89}. In particular, the lower bound also applies to the word RAM model.

\begin{theorem}
In the cell-probe model with cell size $\Theta(\log n)$ bits, suppose the text
$T[0\dd n)\in[0\dd\sigma)^n$, where $\sigma\leq n$, is stored in packed
form, together with a deterministic auxiliary data structure of
$S(n,\sigma)=\Omega(n)$ bits. If $\SA$ queries are supported in
$t_{\SA}$ probes, then
\[
t_{\SA}
=
\Omega\left(
\frac{\log\sigma}
{S(n,\sigma)/n+\log^2\sigma/\log n}
\right).
\]
\end{theorem}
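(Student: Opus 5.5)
We prove the bound with a \emph{systematic-encoding} argument in the spirit of lower bounds for succinct indexes (Demaine--L\'opez-Ortiz, Golynski). The plan is to show that a fast $\SA$ algorithm, together with few extra bits, lets us re-encode a hard random text in fewer bits than its entropy. The encoding stores the auxiliary structure, which costs $S(n,\sigma)$ bits, and omits many text characters that the decoder can recover by simulating $\SA$ queries.

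\textbf{Hard distribution.} Take $T$ uniformly random in $[1\dd\sigma)^{n-1}\$$, so its entropy is about $n\log\sigma$ bits. Partition the text into blocks of $b=\Theta(\log_\sigma n)$ characters, so that each block is exactly one cell of the packed text. With high probability every substring of length $L=c\log_\sigma n$ is unique, for a suitable constant $c\ge 3$. Then the $\SA$ order of suffixes starting inside a block is determined by their first $L$ characters, which lie in $O(1)$ nearby cells.

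\textbf{Compression step.} Choose a set $Q$ of about $n/(t_{\SA}\cdot\text{const})$ query ranks $i$. The decoder runs $\SA[i]$ for each $i\in Q$. Each run probes at most $t_{\SA}$ cells of the packed text, and the auxiliary structure is given to the decoder explicitly. Let $P$ be the set of text cells probed by some query in $Q$. The encoding consists of:
\begin{itemize}
\item the auxiliary structure;
\item the contents of all cells in $P$;
\item the answers $\SA[i]$ for $i\in Q$, which the decoder obtains for free by simulation;
\item for the cells outside $P$, their contents written explicitly, except for cells whose content is implied by the query answers.
\end{itemize}
The key point is a counting claim: the answers $\{\SA[i]\}_{i\in Q}$ combined with knowledge of which ranks they occupy carry roughly $|Q|\log\sigma$ bits of information about unprobed characters. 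Knowing that the suffix at position $p$ has rank $i$ pins down $T[p]$ given $\COUNT$ (as in the reconstruction argument preceding the theorem). So choose $Q$ such that the positions $\SA[i]$ fall in cells not probed, and save about $\log\sigma$ bits per such query. Roughly half the queries should qualify after discarding those whose answer lands in $P$, because $|P|\le |Q|t_{\SA}$ cells cover a small fraction of the text when $|Q|t_{\SA}\ll n/b$. Choose $|Q|=\Theta(n/(b\,t_{\SA}))$.

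\textbf{Accounting.} The encoding length is at most
\[
S + n\log\sigma - \Omega(|Q|\log\sigma) + O(|Q|\log n \cdot \tfrac{\log\sigma}{\log n}) .
\]
The last term accounts for specifying $Q$ and the positions, namely $O(\log n)$ bits per query, but it can be amortized by using a structured choice of $Q$, for instance all ranks of suffixes starting at block boundaries in a chosen region. Comparing with the entropy $n\log\sigma$ forces
\[
S \ge \Omega\!\left(|Q|\log\sigma\right) - O\!\left(|Q|\cdot \log^2\sigma/\log_\sigma n\cdot\ldots\right).
\]
The $\log^2\sigma/\log n$ term in the denominator of the claimed bound should arise exactly from this per-query overhead: a cost of about $\log\sigma$ bits times $\log\sigma/\log n$ per saved character-block. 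Substituting $|Q|\approx n/t_{\SA}$ in units where the saving is $\log\sigma$ per query gives $n\log\sigma/t_{\SA}\lesssim S+n\log^2\sigma/\log n$, which is the stated bound.

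\textbf{Main obstacle.} The hardest step is making the per-query saving real. An $\SA$ answer is a position, and turning it into $\Theta(\log\sigma)$ bits about an \emph{unprobed} character requires that the decoder can infer that character from the rank information alone. I would try using pairs of consecutive ranks $i,i+1$: the suffixes at these ranks must share an ordering constraint on their first characters, and via $\COUNT$-style bucket boundaries this determines the first character of the answered suffix. I would also have to control adaptivity, since probes depend on $T$; the standard fix is to define $P$ adaptively during decoding in a fixed query order.
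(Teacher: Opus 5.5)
\textbf{Gap: the per-query saving can always be defeated.} Your encoding saves $\log\sigma$ bits only for queries $i\in Q$ whose answer $p=\SA[i]$ lies in a cell that no query in $Q$ probes. You justify ``roughly half the queries qualify'' by noting that $|P|\le |Q|\,t_{\SA}$ is a small fraction of the $n/b$ cells. That counting assumes the answers are independent of the probes, and they are not. A query algorithm can, at the cost of one extra probe, always read the cell containing $T[\SA[i]]$ before answering. Then every answer lands in $P$, no query qualifies, and your encoding saves nothing while the algorithm is still fast. So the argument as written yields no bound on $t_{\SA}$.

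Your ``main obstacle'' paragraph names this difficulty but does not resolve it. Pairs of consecutive ranks with $\COUNT$ buckets still only certify the first character of a suffix whose cell the algorithm is free to read. Handling exactly this behaviour is the core of Golynski's technique for reciprocal queries: an inverse query that ``finds'' the forward cell of its own answer within few probes. You would have to redo that argument from scratch. Moreover, for a uniformly random text, reading a text cell does not give a rank, so his permutation theorem does not apply off the shelf.

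The accounting has a second, related problem. The $\log^2\sigma/\log n$ term is asserted (``should arise exactly from this per-query overhead'') rather than derived. A uniformly random text is incompressible and has no waste that would produce it.

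\textbf{How the paper avoids this.} The paper runs no encoding argument itself. It invokes Golynski's theorem as a black box: $t_\rho t_{\rho^{-1}}=\Omega(m\log m/R(m))$ for a representation of $\log(m!)+R(m)$ bits. To do so it chooses the hard text so that $\SA$ restricted to block boundaries is literally $\rho^{-1}$:
\begin{itemize}
\item Each value $\rho[i]$ of a permutation $\rho$ of $[0\dd m)$ is written as $q=\Theta(\log m/\log\sigma)$ characters.
\item Each $b$-bit block gets a leading $1$, so no character equals $\$$ and lexicographic order equals numeric order.
\item Then $\rho[i]$ costs $O(1)$ probes, and $\rho^{-1}[j]$ costs one $\SA$ query plus a select on an $O(n)$-bit marker bitvector.
\end{itemize}
The redundancy is $R(m)=S(n,\sigma)+O(m\log\sigma+n)$ with $m=\Theta(n\log\sigma/\log n)$. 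This is exactly where the $\log^2\sigma/\log n$ term comes from: the flag bits and padding of the packed encoding. Golynski's side conditions are checked in the paper's footnote. If you want to keep an encoding-style proof, the missing piece is a Golynski-type lemma bounding how well an inverse query can locate the forward cell of its answer. The simpler fix is to switch to the paper's hard instance and reduction.
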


\begin{proof}
For constant $\sigma$, the right-hand side is $O(1)$, and the claimed
lower bound is trivial. Hence, for the remainder of the proof, assume
$\sigma\geq 4$.
We reduce from permutation indexing. Let $\rho$ be a permutation of
$[0\dd m)$, and let $t_{\rho}$ and $t_{\rho^{-1}}$ denote the query
times for $\rho[\cdot]$ and $\rho^{-1}[\cdot]$, respectively.
Golynski's lower bound states that any representation using
$\log (m!)+R(m)$ bits satisfies
$t_{\rho}t_{\rho^{-1}}=\Omega(m\log m/R(m))$~\cite{DBLP:conf/soda/Golynski09}.\footnote{More precisely, Golynski's bound assumes
$\max\{t_{\rho},t_{\rho^{-1}}\}<m^{1-\epsilon}$ and
$\min\{t_{\rho},t_{\rho^{-1}}\}=O(\log m/\log\log m)$ for a fixed
constant $\epsilon>0$. Since $t_{\rho}=O(1)$, the second condition is
immediate. For the first, take $\epsilon=1/2$: if
$t_{\rho^{-1}}=\Omega(\sqrt m)$, the claimed bound already follows,
since $S(n,\sigma)=\Omega(n)$ and $\sigma\leq n$ imply
$\log\sigma/(S(n,\sigma)/n+\log^2\sigma/\log n)
=O(\sqrt{\log n})=o(\sqrt m)$; otherwise,
$t_{\rho^{-1}}<\sqrt m$ and Golynski's bound applies.}

Let $b=\lfloor\log\sigma\rfloor-1$. Represent each $\rho[i]$ by its
$\lceil\log m\rceil$-bit binary representation and prepend the minimum number of  $1$s so that its length is divisible by $b$. Since the same prefix
is added to every value, their lexicographic order is preserved. Split
each padded representation into $b$-bit blocks and prepend a $1$ to
each block. Each resulting block has $\lfloor\log\sigma\rfloor$ bits
and therefore represents a character in $[0\dd\sigma)$; moreover, it
is distinct from the all-zero character $\$$. Concatenate the encodings
of $\rho[0],\ldots,\rho[m-1]$ and append $\$$, making $\$$ the unique
smallest character of $T$.

Let $q=\lceil\lceil\log m\rceil/b\rceil$ be the number of characters
encoding each permutation value. Then
$n=mq+1=\Theta(m\log m/\log\sigma)$. Since $\sigma\leq n$, we have
$\log m=\Theta(\log n)$ and
$m=\Theta(n\log\sigma/\log n)$.
A query $\rho[i]$ can be answered in $O(1)$ probes by reading and
decoding its $q$ consecutive characters, which occupy $O(1)$ machine
words. Moreover, for any $i\neq j$, the suffixes starting at the
encodings of $\rho[i]$ and $\rho[j]$ differ within their first $q$
characters, and their lexicographic order is exactly the order of
$\rho[i]$ and $\rho[j]$. Thus, by storing a bitvector that marks in
the suffix array the suffixes starting at encoding boundaries,
$\rho^{-1}[i]$ can be answered by selecting the $(i+1)$th marked suffix,
performing one $\SA$ query, and converting the returned text position
to its encoding index. The bitvector takes $O(n)$ bits and supports
select in $O(1)$ probes. Hence $t_{\rho}=O(1)$ and
$t_{\rho^{-1}}=O(t_{\SA})$.

The packed text uses $m\log m+O(m\log\sigma+n)$ bits, so
$
R(m)=S(n,\sigma)+O(n\log^2\sigma/\log n),
$
using $m=\Theta(n\log\sigma/\log n)$ and $S(n,\sigma)=\Omega(n)$.
Applying Golynski's lower bound gives
$
t_{\SA}
=
\Omega\!\left(
{m\log m}/{(S(n,\sigma)+n\log^2\sigma/\log n)}
\right),
$
which simplifies to
$
\Omega\!\left(
{\log\sigma}/{(S(n,\sigma)/n+\log^2\sigma/\log n)}
\right).$
\end{proof}

A simple interpretation of this result is that with $O((n/\tau)\log\sigma)$ bits of auxiliary space, $\SA$ queries require $\Omega(\tau)$ time for $\tau=O(\min\{\log\sigma,\log_{\sigma}n\})$.

 \section{Concluding Remark}\label{sec:6:conc}
% A natural follow-up question is whether even faster $\ISA$ queries can be achieved within succinct/compact space. Kempa and Kociumaka recently answered this question positively in an arXiv paper, showing that $O(1)$-time $\ISA$ queries are possible in compact space for all alphabet sizes~\cite{kempa2026constantISA}. In another recent arXiv paper, they showed that $O(1)$-time $\SA$ queries are not possible in compact space~\cite{kempa2026cellSALB}. 
% This unconditionally confirms our conjecture that the query complexities of $\SA$ and $\ISA$ can be separated in compact space.

 A natural follow-up question is whether even faster $\ISA$ queries can be achieved within succinct/compact space. Kempa and Kociumaka recently answered this question positively in an arXiv paper, showing that $O(1)$-time $\ISA$ queries are possible in compact space for all alphabet sizes~\cite{kempa2026constantISA}. In another recent arXiv paper, they showed that $O(1)$-time $\SA$ queries are not possible in compact space~\cite{kempa2026cellSALB}. 
 This unconditionally confirms that the query complexities of SA and ISA can be separated in compact space.
 We conclude by leaving open the following question: can the current $O(n\log\log\log \sigma)$-bit auxiliary space for suffix trees/arrays be reduced further while retaining polylogarithmic query times?

\paragraph{Acknowledgments.}
 This research was supported in part by U.S. National Science Foundation (NSF) Awards CCF-2316691 and CCF-2315822. The author would like to thank the reviewers of FOCS 2026 for their valuable feedback and constructive comments.

\paragraph{AI Disclosure.} All research ideas, technical contributions, proofs, and results presented in this paper were developed by the author. Generative AI tools were used solely to assist with proofreading, improving the presentation of the manuscript, and identifying potential typographical, grammatical, stylistic, or logical issues. The author independently evaluated all AI-generated suggestions and takes full responsibility for the content and correctness of the paper.

	\bibliographystyle{alphaurl} 
	\bibliography{main}

\end{document}